\documentclass[conference]{IEEEtran}
\IEEEoverridecommandlockouts

\usepackage{cite}
\usepackage{subfigure}
\usepackage{amssymb,amsfonts}
\usepackage[centertags]{amsmath} 
\usepackage{algorithmic}
\usepackage{graphicx}
\usepackage{textcomp}
\usepackage{xcolor}
\usepackage{amsthm}
\usepackage{booktabs}
\usepackage{tabularx}
\usepackage{makecell}
\usepackage{multirow}
\theoremstyle{plain}

\newtheorem{proposition}{\textbf{Proposition}}
\newtheorem{remark}{\textbf{Remark}}
\newtheorem{corollary}{\textbf{Corollary}}
\def\BibTeX{{\rm B\kern-.05em{\sc i\kern-.025em b}\kern-.08em
    T\kern-.1667em\lower.7ex\hbox{E}\kern-.125emX}}
\begin{document}
%
\title{JSCGC: Joint Source-Channel-Generation Coding for Wireless Generative Communications}
\author{Tong Wu, Zhiyong Chen, \emph{Senior Member, IEEE}, Guo Lu, \emph{Member, IEEE}, Li Song, \emph{Senior Member, IEEE},\\ Feng Yang, Meixia Tao, \emph{Fellow, IEEE}, and Wenjun Zhang, \emph{Fellow, IEEE}
\thanks{Part of this work is to appear at IEEE ISIT-W 2026 \cite{isit-jscgc}.}
\thanks{The authors are with the Cooperative Medianet Innovation Center, the School of Information Science and Electronic Engineering, Shanghai Jiao Tong University, Shanghai 200240, China (Email: \{wu\_tong, zhiyongchen, luguo2014, song\_li, yangfeng, mxtao, zhangwenjun\}@sjtu.edu.cn). (Corresponding author: Zhiyong Chen.)}}
\maketitle

\begin{abstract}
Conventional communication systems, including both separation-based coding and learning-based joint source-channel coding (JSCC), are typically designed under Shannon’s rate-distortion theory. However, relying on generic distortion metrics fails to capture complex human visual perception, often resulting in blurred or unrealistic reconstructions. In this paper, we propose Joint Source-Channel-Generation Coding (JSCGC), a generative communication paradigm that replaces the conventional decoder with a generative model at the receiver. The received signal is treated as a condition that controls the sampling process into the learned conditional distribution, reformulating communication from deterministic reconstruction for distortion minimization to controlled generation for mutual information maximization under perceptual constraints.  Based on this formulation, we develop a unified joint training and efficient stochastic sampling framework, and provide theoretical analysis of its effectiveness in both learning and inference stages. Extensive experiments on latent-space image transmission demonstrate that the JSCGC consistently improves feature-based, semantic-level, and distributional quality across diverse channel conditions, while exhibiting a distinct error behavior characterized by semantic inconsistency rather than distortion.
\end{abstract}
\begin{IEEEkeywords}
Joint Source-Channel-Generation Coding, Generative Communications, Generation Controllability, Diffusion Models, End-to-End Training.
\end{IEEEkeywords}

\section{Introduction}
For decades, wireless communication system design has been largely shaped by the reconstruction paradigm rooted in Shannon's rate-distortion (RD) theory \cite{RD}. In this framework, the receiver is designed to recover a point estimate of the source signal by minimizing a predefined distortion metric under a given bit rate constraint, as shown in Fig. 1(a). Following this principle, classical communication approaches and recent deep joint source-channel coding (JSCC) methods \cite{gundu2019, NTSCC, MambaJSCCWu, video} have primarily focused on optimizing explicit distortion measures like mean squared error (MSE) and perceptual alignment metrics.

However, optimizing toward a specific distortion metric may introduce metric-induced bias. As stated by Goodhart's law, when a proxy measure becomes the optimization target, it may no longer faithfully reflect the underlying objective \cite{goodhart}. In the wireless communication system, this can lead to schemes that achieve lower distortion values but fail to preserve the visual quality and relevant information \cite{LPIPSZhang}. For example, optimization for MSE often produces overly smooth reconstructions, while optimization for learned perceptual image patch similarity (LPIPS) may introduce structural artifacts \cite{HiFiC,Krawczy}. These issues become more pronounced under low-rate or highly noisy channel conditions.


To address these limitations, rate-distortion-perception (RDP) based approaches \cite{RDPJun1, RDPJun2, RDPC, PDJSCC} have introduced perceptual constraints into the training objective, typically through adversarial losses, to better balance distortion fidelity and perceptual quality. More recently, generative models have been incorporated into the receiver to improve decoding quality \cite{CDDM,ICDM,SGD-JSCC,DiffCom,DeepJSCC-Diff,CommIN,DiffJSCC}. Representative methods such as DiffCom \cite{DiffCom}, DeepJSCC-Diff \cite{DeepJSCC-Diff}, and CommIN \cite{CommIN} utilize features from JSCC to condition pre-trained generative models for generating images, while DiffJSCC \cite{DiffJSCC} further adapts diffusion models to communication scenarios through fine-tuning. Although these methods significantly improve perceptual quality, they still rely on distortion-shaped representations, and the generation process remains conditioned on reconstructed signals, thus not fully removing the influence of distortion-based design.

This limitation reflects a more fundamental issue in reconstruction-based communication systems, where performance is evaluated through proxy distortion objectives. These objectives inevitably shape the learned point-estimation-based encoding and decoding strategies, leading systems to optimize surrogate metrics rather than the content quality.
\begin{figure}[t]
\centerline{\includegraphics[width=0.45\textwidth]{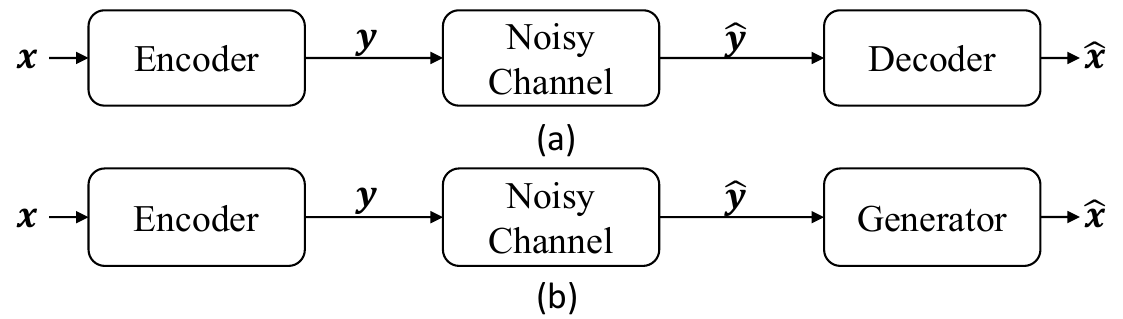}}
\caption{System model: (a) reconstruction paradigm; (b) generation paradigm.}
\label{System_model}
\end{figure}

Recent advances in generative foundation models \cite{lipman2023flow} have opened new possibilities for communication system design. Diffusion models and multimodal generative models have demonstrated an unprecedented ability to learn the distribution of natural data through large-scale pretraining, thereby acquiring powerful content generation capabilities. Under such a paradigm, communication systems may no longer need to transmit all information required for reconstruction. Instead, it becomes sufficient to convey semantic conditions that guide the generative process at the receiver, allowing the powerful generative model to synthesize high-quality content consistent with the source. This emerging capability suggests that communication systems can be redesigned to convey semantic conditions for generation rather than information for reconstruction, providing a new opportunity for generation-oriented communication paradigms.

Motivated by the observation, we propose a new generative communication paradigm termed \textbf{Joint Source-Channel-Generation Coding (JSCGC)}. As shown in Fig. 1(b), the core idea is to replace the conventional decoder with a generator, thereby shifting the paradigm from reconstruction to controlled generation. In this paradigm, the received signal is no longer used for direct reconstruction, but treated as a semantic condition that guides sampling from the conditional distribution. Correspondingly, the transmitter is no longer designed for reconstruction but for controlling the generative process. Therefore, communication is reformulated from distortion minimization to controlled generation guided by the received signal.

Within this paradigm, we first formulate the communication objective as mutual information (MI) maximization under a perceptual constraint, without relying on explicit distortion functions. Building on this formulation, we derive a tractable optimization objective via variational inference and Lagrangian relaxation, enabling end-to-end learning of the encoder, transmission, and generator. We further develop an efficient conditional sampling strategy to enable high-quality generation under limited communication resources. The resulting scheme is theoretically proven to be consistent with the underlying information-theoretic formulation, as it simultaneously promotes mutual information maximization and perceptual consistency in both training and inference stages. To implement the proposed framework, we design an end-to-end system that combines a Mamba-based encoder \cite{MambaJSCCWu} with a latent flow matching model \cite{z-image}. The two components are connected via a communication-aware adapter, enabling controllable generation from received signals while preserving the prior knowledge learned by large-scale generative models.


The main contributions of this paper are summarized as follows:
\begin{itemize}
\item \textbf{JSCGC Framework and Implementation}. We propose JSCGC, a joint source-channel-generation communication framework that replaces deterministic reconstruction with controlled generative communication. The communication objective is formulated as mutual information maximization under a perceptual constraint, avoiding explicit distortion functions during training. We further present an implementation of JSCGC for latent-space image transmission using a Mamba-based encoder, a latent flow matching model, and a communication-aware adapter. The adapter injects the received wireless signal into the internal feature space of the generative model, enabling direct control of the generation trajectory and fine-grained manipulation of the synthesized content.
  \item \textbf{Efficient Training and Sampling}. We derive a tractable training objective using variational inference and Lagrangian relaxation, enabling end-to-end optimization of encoding, transmission, and generation. An efficient conditional sampling strategy is further developed for resource-constrained communication scenarios.
  \item \textbf{Theoretical Guarantees}. We theoretically prove that the derived training objective is consistent with the original information-theoretic formulation, promoting both mutual information maximization and perceptual consistency. We also illustrate that the efficient sampling process preserves high perceptual quality both in KL divergence and Wasserstein distance.
  \item \textbf{Numerical Experiments}. Extensive experiments show that the proposed JSCGC outperforms representative JSCC and generative communication baselines in terms of perceptual quality and semantic consistency under various channel conditions. For example, under the additive white Gaussian noise (AWGN) channel setting with a signal-to-noise ratio (SNR) of 5 dB on the Kodak dataset, JSCGC reduces LPIPS and FID to 79.42\% and 53.68\% of the baseline values achieved by DiffJSCC, respectively, while improving the CLIP score by 11\%.
\end{itemize}

The rest of this paper is organized as follows. The framework of the JSCGC is introduced in Section \ref{sec:framework}. The training and sampling algorithms of the JSCGC are presented in Section \ref{sec:algorithm}. Finally, extensive experimental results are presented in Section \ref{sec:experiments}, and conclusions are drawn in Section \ref{sec:conclusion}.
\section{Proposed JSCGC Framework}\label{sec:framework}
In this section, we first illustrate the architecture and objective of the JSCGC paradigm.
Then we give an example of the specific implementation for wireless image transmission. %

\subsection{Architecture and Objective of the JSCGC Paradigm}
\begin{figure*}[t]
  \centering
  \includegraphics[width=0.95\textwidth]{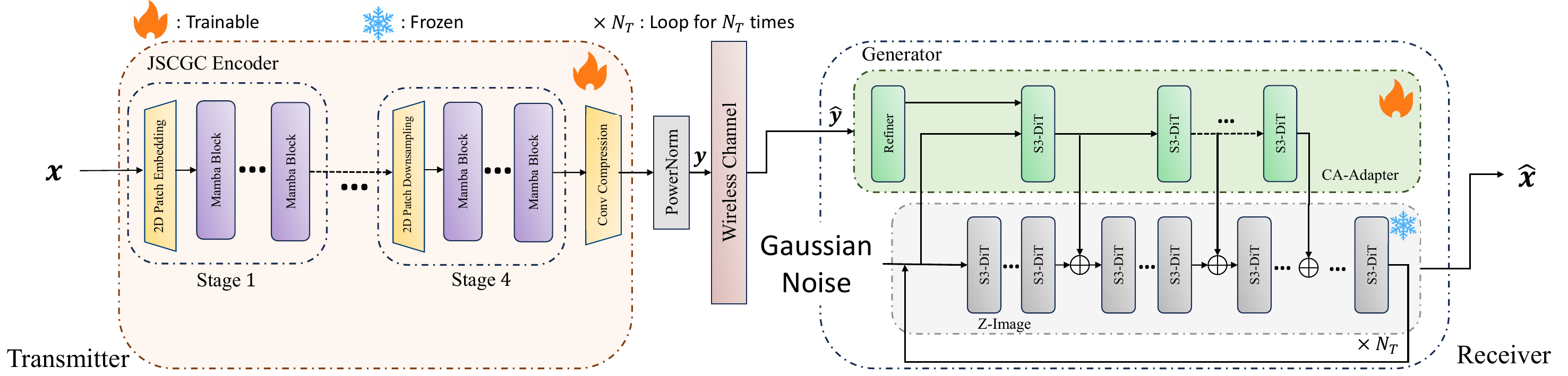}
  \caption{An implementation example of the proposed JSCGC scheme for wireless image transmission.} 
  \label{JSCGC_model}
  \end{figure*}
As shown in Fig.~\ref{System_model}, we consider a communication system in which the transmitter encodes the source data $\mathbf{x}\in \mathbb{R}^m$ into a channel signal $\mathbf{y}\in\mathbb{C}^l$ through an encoder $E_\phi(\cdot)$ parameterized by $\phi$, where $l<m$. The encoded signal $\mathbf{y}$ is then transmitted through a communication channel, resulting in the received signal $\hat{\mathbf{y}} \in \mathbb{C}^l$. 
At the receiver, unlike traditional JSCC schemes that perform deterministic reconstruction, the conventional decoder is replaced by a generative model. Specifically, the received signal $\hat{\mathbf{y}}$ serves as a semantic condition that guides conditional generation, and the generated output $\hat{\mathbf{x}}\in\mathbb{R}^{m}$ is sampled from $q_\theta(\mathbf{\hat{x}}|\hat{\mathbf{y}})$ parameterized by a generative model $G_{\theta}(\cdot)$. Noticing that $\hat{\mathbf{x}}$ and $\mathbf{x}$ lie in the same space, we equivalently write the generative distribution as $q_\theta(\mathbf{{x}}|\hat{\mathbf{y}})$ and $q_\theta(\mathbf{\hat{x}})$ for measuring the distributional distance with the source data distribution $p(\mathbf{x}|\hat{\mathbf{y}})$ and $p(\mathbf{x})$. Accordingly, the communication process can be represented by the following Markov chain:
\begin{align}\label{CMC}
\mathbf{X} \rightarrow \mathbf{Y} \rightarrow \hat{\mathbf{Y}} \rightarrow \hat{\mathbf{X}}.
\end{align}
Here, $\mathbf{X},\  \mathbf{Y},\  \hat{\mathbf{Y}},$ and $\hat{\mathbf{X}}$ denote the random variables corresponding to the realizations $\mathbf{x},\  \mathbf{y},\  \hat{\mathbf{y}},$ and $\hat{\mathbf{x}}$, respectively.

As established, the fundamental objective of communication is the reliable transfer of information. Meanwhile, the generated output is ultimately intended for human perception. Therefore, communication fidelity should be evaluated not only by information preservation but also by perceptual consistency with the natural data distribution. To this end, we impose a perceptual constraint requiring the distribution of the generated outputs to remain close to the distribution $p(\mathbf{x})$.

Accordingly, the design objective of JSCGC is formulated as the following constrained optimization problem:
\begin{align}\label{training_objective}
  &\max_{\theta, \phi}\  I(\mathbf{X}; \hat{\mathbf{Y}}), \nonumber\\
  &\text{s.t.}\quad d_p\big(p(\mathbf{x}), q_\theta(\mathbf{x})) \leq \zeta.
\end{align}
where $d_p(\cdot)$ denotes a divergence measure that quantifies the perceptual discrepancy between the source distribution and the generated data distribution, and $\zeta$ is a prescribed tolerance.

Unlike conventional RD and RDP formulations, which explicitly optimize distortion based objectives, JSCGC is formulated directly from an information preservation perspective. The communication system is optimized to maximize the information conveyed through the received signal while satisfying a perceptual consistency constraint, without relying on predefined distortion functions.

Although the above formulation provides a principled objective for generative communication, directly optimizing it is challenging because both the mutual information term and the perceptual divergence are generally intractable for high-dimensional data distributions. To address this challenge, we develop a practical JSCGC framework based on deep neural networks, through which the objective can be efficiently approximated and optimized in an end-to-end way.


\subsection{Implementation example of JSCGC}

As an example of the proposed paradigm, we implement JSCGC for wireless image transmission, as illustrated in Fig.~\ref{JSCGC_model}.
In this implementation, image transmission is performed in the latent space, where $\mathbf{x}$ denotes the latent representation of the source image extracted by a pre-trained extractor. The encoder is built based on the MambaJSCC architecture~\cite{MambaJSCCWu}, which employs multi-stage Mamba blocks to efficiently extract high-level semantic features and map them to the channel input for transmission. At the receiver, we adopt Z-Image \cite{z-image} as the generator to achieve strong generative capability with reduced computational complexity. Z-Image is a latent-space generative model built upon S3-DiT and pretrained on large-scale datasets, enabling it to effectively model the distribution of natural data.

However, directly incorporating communication signals $\hat{\mathbf{y}}$ into a large-scale pretrained generative model is challenging due to the mismatch between communication representations and generative feature spaces. To address this issue while preserving the pretrained knowledge of the generator, we design a communication-aware adapter (CA-Adapter) to modulate the generation process using the received signal. Specifically, the CA-Adapter consists of $N_I$ cascaded stages. At each stage, the adapter takes the received signal $\hat{\mathbf{y}}$ and the intermediate latent variable $\mathbf{x}_t$ as inputs. The resulting feature maps from the $i$-th adapter stage are injected into the $L_i$-th layer of the Z-Image backbone via element-wise addition, thereby allowing the received signal to directly influence the generation trajectory and enabling fine-grained control over the generated content.

\section{Training and Sampling Algorithms of JSCGC}\label{sec:algorithm}
In this section, we first introduce the training algorithm of the proposed JSCGC framework and analyze its consistency with the original communication objective. We then present the fast sampling algorithm and theoretical analysis its effectiveness in generating samples with low perception distance.
\subsection{Training Algorithm of JSCGC}
To obtain a tractable realization of the JSCGC objective in (\ref{training_objective}), we instantiate the perceptual discrepancy using the KL divergence and apply a Lagrangian relaxation to convert the constrained optimization into an unconstrained form.

Since the source distribution \(p(\mathbf{X})\) is fixed, maximizing $I(\mathbf{X}; \hat{\mathbf{Y}})$ is equivalent to minimizing the conditional entropy $H(\mathbf{X} | \hat{\mathbf{Y}})=\mathbb{E}_{(\mathbf{x},\hat{\mathbf{y}})\sim p(\mathbf{x},\hat{\mathbf{y}})}\left[-\log p(\mathbf{x}|\hat{\mathbf{y}})\right]$. 
Therefore, the optimization objective can be relaxed as
\begin{align}
\min_{\theta, \phi} H(\mathbf{X} | \hat{\mathbf{Y}}) + D_{KL}(p(\mathbf{x})||q_\theta(\mathbf{x})).
\end{align}

In practice, the true posterior distribution $p(\mathbf{x}|\hat{\mathbf{y}})$ requires knowledge of the ground-truth joint distribution of natural data and channel outputs, which is theoretically intractable. To address this issue, we have the following proposition.
\begin{proposition}\label{prop:variational_info}
The objective $H(\mathbf{X} | \hat{\mathbf{Y}}) + D_{KL}(p(\mathbf{x})||q_\theta(\mathbf{x}))$ can be minimized via minimizing the loss function:
\begin{align}\label{eq:gradient_loss}
\mathcal{L}(\theta,\phi)= \mathbb{E}_{t, \mathbf{x}_0, \hat{\mathbf{y}}, \mathbf{\epsilon}} \left[ \| (\mathbf{\epsilon} - \mathbf{x}_0) - \boldsymbol{\nu}_\theta(\mathbf{x}_t, t, \hat{\mathbf{y}}) \|^2 \right].
\end{align}
where $\boldsymbol{\nu}_\theta(\cdot)$ is the output of the neural network in the generator, $\mathbf{x}_0=\mathbf{x}, \mathbf{\epsilon} \sim N(0,\mathbf{I})$, and
\begin{align}\label{forward_process}
\mathbf{x}_t=(1-\frac{t}{T})\mathbf{x}_0+\frac{t}{T} \mathbf{\epsilon} \quad \text{ for } t\in\{1,2,...,T\},
\end{align}
\end{proposition}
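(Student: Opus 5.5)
The plan is to bound the objective from above by a single variational quantity and then relate that quantity to the flow-matching loss.

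\emph{Step 1: variational bound on the conditional entropy.} For any conditional model $q_\theta(\mathbf{x}|\hat{\mathbf{y}})$, Gibbs' inequality gives
\begin{align}
H(\mathbf{X}|\hat{\mathbf{Y}}) = \mathbb{E}_{p(\mathbf{x},\hat{\mathbf{y}})}\left[-\log q_\theta(\mathbf{x}|\hat{\mathbf{y}})\right] - \mathbb{E}_{p(\hat{\mathbf{y}})}\left[D_{KL}\big(p(\mathbf{x}|\hat{\mathbf{y}})\|q_\theta(\mathbf{x}|\hat{\mathbf{y}})\big)\right],
\end{align}
so $H(\mathbf{X}|\hat{\mathbf{Y}})$ is upper bounded by the conditional cross-entropy $\mathcal{C}(\theta,\phi)=\mathbb{E}_{p(\mathbf{x},\hat{\mathbf{y}})}[-\log q_\theta(\mathbf{x}|\hat{\mathbf{y}})]$. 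The dependence on $\phi$ enters through the joint law of $(\mathbf{x},\hat{\mathbf{y}})$ induced by the encoder and the channel.

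\emph{Step 2: control of the marginal KL term.} Here I would use $q_\theta(\mathbf{x})=\mathbb{E}_{p(\hat{\mathbf{y}})}[q_\theta(\mathbf{x}|\hat{\mathbf{y}})]$. This identity holds because the generator is driven by the true channel output distribution. By joint convexity of the KL divergence (equivalently, the chain rule together with data processing),
\begin{align}
D_{KL}\big(p(\mathbf{x})\|q_\theta(\mathbf{x})\big)\le \mathbb{E}_{p(\hat{\mathbf{y}})}\left[D_{KL}\big(p(\mathbf{x}|\hat{\mathbf{y}})\|q_\theta(\mathbf{x}|\hat{\mathbf{y}})\big)\right].
\end{align}
The right-hand side equals $\mathcal{C}(\theta,\phi)-H(\mathbf{X}|\hat{\mathbf{Y}})\le \mathcal{C}(\theta,\phi)$. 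Adding the two bounds gives
\begin{align}
H(\mathbf{X}|\hat{\mathbf{Y}})+D_{KL}(p(\mathbf{x})\|q_\theta(\mathbf{x}))\le \mathcal{C}(\theta,\phi).
\end{align}
The whole objective is therefore dominated by the conditional negative log-likelihood of the generator.

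\emph{Step 3: from likelihood to the flow-matching loss.} This is the main obstacle. Under the linear interpolation (\ref{forward_process}), the conditional velocity of the path is $\mathbf{\epsilon}-\mathbf{x}_0$ (up to the time rescaling $1/T$). Minimizing (\ref{eq:gradient_loss}) over $\boldsymbol{\nu}_\theta$ therefore regresses onto the marginal velocity field $\mathbb{E}[\mathbf{\epsilon}-\mathbf{x}_0\mid\mathbf{x}_t,\hat{\mathbf{y}}]$; this is the standard conditional flow matching argument, in which the cross term is independent of $\theta$.

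I would then invoke a known likelihood bound for flows and diffusions. One option is the score/velocity-matching bound on $D_{KL}(p(\mathbf{x}|\hat{\mathbf{y}})\|q_\theta(\mathbf{x}|\hat{\mathbf{y}}))$ for the stochastic sampler, for example via Girsanov's theorem. Another is to view the discretized process as a hierarchical VAE whose ELBO terms are Gaussian KLs proportional to $\|(\mathbf{\epsilon}-\mathbf{x}_0)-\boldsymbol{\nu}_\theta\|^2$ with time-dependent weights. Either route gives
\begin{align}
\mathcal{C}(\theta,\phi)\le \sum_t w_t\,\mathbb{E}\left\|(\mathbf{\epsilon}-\mathbf{x}_0)-\boldsymbol{\nu}_\theta(\mathbf{x}_t,t,\hat{\mathbf{y}})\right\|^2 + c,
\end{align}
where $c$ is independent of $(\theta,\phi)$ because $p(\mathbf{x})$ is fixed.

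Following common practice, the weights $w_t$ are then replaced by uniform sampling of $t$, which yields $\mathcal{L}(\theta,\phi)$ as a surrogate. The rigorous link is only a weighted loss. I would state the result as \emph{minimizing the surrogate minimizes an upper bound}: both the entropy and KL terms vanish when $\boldsymbol{\nu}_\theta$ equals the true conditional velocity, and at that point the loss attains its minimum. The gradient with respect to $\phi$ flows through $\hat{\mathbf{y}}$ via the reparameterized channel.
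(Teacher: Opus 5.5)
Your proposal is correct and follows the paper's own route. The chain-rule/data-processing bound $D_{KL}(p(\mathbf{x})\|q_\theta(\mathbf{x}))\le\mathbb{E}_{p(\hat{\mathbf{y}})}[D_{KL}(p(\mathbf{x}|\hat{\mathbf{y}})\|q_\theta(\mathbf{x}|\hat{\mathbf{y}}))]$ reduces the objective to the conditional cross-entropy $\mathbb{E}[-\log q_\theta(\mathbf{x}|\hat{\mathbf{y}})]$, which is then bounded by the discretized ELBO with its Gaussian posterior KLs reparameterized in the velocity $\mathbf{\epsilon}-\mathbf{x}_0$ and the time weights dropped. The paper drops those weights silently; you rightly flag that this leaves only a surrogate.

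One correction to your closing remark: at $\boldsymbol{\nu}_\theta=\boldsymbol{\nu}^*$ at most the KL terms vanish. $H(\mathbf{X}|\hat{\mathbf{Y}})$ does not depend on $\theta$ and does not vanish; its upper bound can only be pushed down through the encoder $\phi$, via the $\phi$-dependent minimal value $\mathbb{E}\|(\mathbf{\epsilon}-\mathbf{x}_0)-\boldsymbol{\nu}^*\|^2$ of the loss.
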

\begin{proof}
  The proof is provided in the Appendix \ref{app:training_loss}.
  \end{proof}

\begin{remark}[Joint Source-Channel-Generation Learning]
  The proposed training procedure naturally couples source coding, channel transmission, and generative modeling into a unified optimization framework. As shown by the gradient with respect to the encoder parameters,
\begin{equation}
  \begin{aligned}
    \nabla_\phi \mathcal{L} = -2 \mathbb{E}_{t, \mathbf{x}_0, \mathbf{\epsilon}} \Bigg[ \underbrace{\left( (\mathbf{\epsilon} - \mathbf{x}_0) - \boldsymbol{\nu}_\theta(\mathbf{x}_t, t, \hat{\mathbf{y}}) \right)}_{\text{Estimation Error}} \\ \cdot \underbrace{\nabla_{\hat{\mathbf{y}}} \boldsymbol{\nu}_\theta(\mathbf{x}_t, t, \hat{\mathbf{y}})}_{\text{Generation Term}} \cdot \underbrace{\nabla_{\mathbf{y}} W(\mathbf{y})}_{\text{Channel Term}} \cdot \underbrace{\nabla_\phi E_\phi(\mathbf{x}_0)}_{\text{Source Term}} \Bigg].
  \end{aligned}
  \end{equation}
The encoder is optimized through gradients propagated from the generative model via the communication channel. This results in a fully joint optimization of source, channel, and generation models, distinguishing JSCGC from existing generative communication schemes where the encoder is typically independent of the generator.
\end{remark}
\begin{remark}\label{paradigmshift}
Compared with conventional reconstruction-based JSCC and rate–distortion–perception formulations, the proposed framework removes the need for explicit distortion functions and deterministic decoders. Instead, communication is reformulated as controllable conditional generation guided by the received signal.
\end{remark}

Overall, the proposed training algorithm provides a tractable variational realization of the JSCGC objective (\ref{training_objective}). It simultaneously minimizes a surrogate of the perceptual divergence and maximizes a lower bound of mutual information, thereby aligning learning with the original information-theoretic formulation.

\subsection{Theoretical Analysis of Training Effectiveness}

We now analyze whether optimizing the proposed surrogate loss in \eqref{eq:gradient_loss} is consistent with the original JSCGC objective in \eqref{training_objective}. Specifically, we examine its impact on both the perceptual constraint and the mutual information term.

\begin{proposition}[Optimal Solution of the Training Objective]\label{prop:optimal_solution}
  The optimal solution $\boldsymbol{\nu}^*(\mathbf{x}_t, t, \hat{\mathbf{y}})$ of the training objective of $\mathcal{L}(\theta,\phi) = \mathbb{E}_{t, \mathbf{x}_0, \hat{\mathbf{y}}, \mathbf{\epsilon}} \left[ \| (\mathbf{\epsilon} - \mathbf{x}_0) - \boldsymbol{\nu}_\theta(\mathbf{x}_t, t, \hat{\mathbf{y}}) \|^2 \right]$ is 
  \begin{equation}
\boldsymbol{\nu}^*(\mathbf{x}_t, t, \hat{\mathbf{y}}) =  - \frac{t\nabla_{\mathbf{x}_t} \log p_t(\mathbf{x}_t|\hat{\mathbf{y}})+T\mathbf{x_t}}{T-t} 
  \end{equation}
\end{proposition}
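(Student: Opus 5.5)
The plan is to characterize the minimizer pointwise as a conditional expectation. Then we convert that expectation into a score via a conditional Tweedie identity for the Gaussian perturbation in \eqref{forward_process}. Throughout, write $s=t/T\in(0,1)$, so that $\mathbf{x}_t=(1-s)\mathbf{x}_0+s\boldsymbol{\epsilon}$. We restrict to $t<T$, since the expression has a pole at $t=T$. The standing assumption is that $\boldsymbol{\epsilon}\sim N(0,\mathbf{I})$ is drawn independently of $(\mathbf{x}_0,\hat{\mathbf{y}})$. This holds because $\hat{\mathbf{y}}$ depends on $\mathbf{x}_0$ only through the encoder and the channel.

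\emph{Step 1 (regression minimizer).} For fixed $\phi$ and fixed $t$, the loss $\mathcal{L}$ is a squared error. The network output is an arbitrary measurable function of $(\mathbf{x}_t,t,\hat{\mathbf{y}})$. Conditioning on $(\mathbf{x}_t,\hat{\mathbf{y}})$ and using the bias--variance decomposition, the unconstrained minimizer is
\begin{equation*}
\boldsymbol{\nu}^*(\mathbf{x}_t,t,\hat{\mathbf{y}})=\mathbb{E}\big[\boldsymbol{\epsilon}-\mathbf{x}_0\,\big|\,\mathbf{x}_t,\hat{\mathbf{y}}\big].
\end{equation*}
This is the standard $L^2$ projection argument, applied separately for each $t$ because $t$ is an input to the network.

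\emph{Step 2 (eliminate $\mathbf{x}_0$).} Since $\mathbf{x}_0=(\mathbf{x}_t-s\boldsymbol{\epsilon})/(1-s)$ deterministically given $\mathbf{x}_t$ and $\boldsymbol{\epsilon}$, linearity of conditional expectation gives
\begin{equation*}
\boldsymbol{\nu}^*=\frac{\mathbb{E}[\boldsymbol{\epsilon}\,|\,\mathbf{x}_t,\hat{\mathbf{y}}]-\mathbf{x}_t}{1-s}.
\end{equation*}

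\emph{Step 3 (conditional Tweedie).} We then show that $\mathbb{E}[\boldsymbol{\epsilon}\,|\,\mathbf{x}_t,\hat{\mathbf{y}}]=-s\,\nabla_{\mathbf{x}_t}\log p_t(\mathbf{x}_t|\hat{\mathbf{y}})$. The route is to write
\begin{equation*}
p_t(\mathbf{x}_t|\hat{\mathbf{y}})=\int p(\mathbf{x}_0|\hat{\mathbf{y}})\,N\big(\mathbf{x}_t;(1-s)\mathbf{x}_0,s^2\mathbf{I}\big)\,d\mathbf{x}_0 .
\end{equation*}
This uses the independence of $\boldsymbol{\epsilon}$ from $(\mathbf{x}_0,\hat{\mathbf{y}})$, so the transition kernel does not depend on $\hat{\mathbf{y}}$. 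We then differentiate under the integral. The Gaussian gradient is $-(\mathbf{x}_t-(1-s)\mathbf{x}_0)/s^2=-\boldsymbol{\epsilon}/s$. Dividing by $p_t(\mathbf{x}_t|\hat{\mathbf{y}})$ turns the integral into a posterior expectation, which gives the identity.

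Substituting the identity into Step 2 yields
\begin{equation*}
\boldsymbol{\nu}^*=-\frac{s\nabla\log p_t+\mathbf{x}_t}{1-s}.
\end{equation*}
Multiplying numerator and denominator by $T$ gives exactly the claimed form $-\big(t\nabla_{\mathbf{x}_t}\log p_t(\mathbf{x}_t|\hat{\mathbf{y}})+T\mathbf{x}_t\big)/(T-t)$.

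The main obstacle is Step 3, specifically justifying that the conditioning on $\hat{\mathbf{y}}$ passes through cleanly. This needs the independence of $\boldsymbol{\epsilon}$ and the Markov structure \eqref{CMC}, together with mild regularity (finite second moments of $\mathbf{x}_0$) to exchange gradient and integral. The remaining steps are algebra.
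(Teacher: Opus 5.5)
Your proposal is correct and follows essentially the same route as the paper: pointwise $L^2$ minimization gives $\boldsymbol{\nu}^*=\mathbb{E}[\boldsymbol{\epsilon}-\mathbf{x}_0\mid\mathbf{x}_t,t,\hat{\mathbf{y}}]$. The conditional Tweedie identity $\nabla_{\mathbf{x}_t}\log p_t(\mathbf{x}_t|\hat{\mathbf{y}})=-\frac{T}{t}\mathbb{E}[\boldsymbol{\epsilon}\mid\mathbf{x}_t,\hat{\mathbf{y}}]$ then turns this into the score form, and the remaining differences are cosmetic: you eliminate $\mathbf{x}_0$ before invoking Tweedie and derive it from the Gaussian kernel rather than citing it, and that step actually needs only the independence of $\boldsymbol{\epsilon}$ from $(\mathbf{x}_0,\hat{\mathbf{y}})$, not the Markov chain.
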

\begin{proof}
  The proof is shown in Appendix \ref{app:optimal_solution}
\end{proof}
Proposition \ref{prop:optimal_solution} shows that the neural network $\boldsymbol{\nu}_\theta(\cdot)$ is trained to approximate the theoretical optimum $\boldsymbol{\nu}^*(\cdot)$, which implicitly estimates $-\frac{t\nabla_{\mathbf{x}_t} \log p_t(\mathbf{x}_t|\hat{\mathbf{y}}) + T\mathbf{x}_t}{T-t}$.

We next characterize the distributions $p(\mathbf{x})$ and $q_\theta(\mathbf{x})$. For notational clarity, the learned distribution induced by the reverse-time stochastic differential equation (SDE) is denoted by $q_\theta^s(\mathbf{x})$.
\begin{corollary}\label{cor:SDE_derivation}
  The true data distribution $p(\mathbf{x})$ and the learned distribution $q_\theta^s(\mathbf{x})$ can be expressed as:
  \begin{align}
    p(\mathbf{x}) = \int p(\mathbf{x}|\hat{\mathbf{y}}) p(\hat{\mathbf{y}}) d\hat{\mathbf{y}}, \quad q_\theta^s(\mathbf{x}) = \int q_\theta^s(\mathbf{x}|\hat{\mathbf{y}}) p(\hat{\mathbf{y}}) d\hat{\mathbf{y}}.
  \end{align}
  where $p(\mathbf{x}|\hat{\mathbf{y}})$ and $q_\theta^s(\mathbf{x}|\hat{\mathbf{y}})$ are the marginal distributions of the SDEs at $t=0$, initialized with $\mathbf{x}_T\sim \pi(\mathbf{x}_T)$
\begin{align}
    &d\mathbf{x}_t = [\frac{2}{T}\boldsymbol{\nu}^*(\mathbf{x}_t, t,\hat{\mathbf{y}}) - \mathbf{f}(\mathbf{x}_t, t)] dt + g(t) d\bar{\mathbf{w}}_t,\label{eq:true_sde}\\
    &d\mathbf{x}_t = [\frac{2}{T}\boldsymbol{\nu}_\theta(\mathbf{x}_t, t,\hat{\mathbf{y}}) - \mathbf{f}(\mathbf{x}_t, t)] dt + g(t) d\bar{\mathbf{w}}_t,\label{eq:learned_sde}
\end{align}
respectively, where $\mathbf{f}(\mathbf{x}_t, t)= -\frac{1}{T-t}\mathbf{x}_t$, $g(t)= \sqrt{\frac{2t}{T(T-t)}}$ and $\pi(\mathbf{x}_T)$ is the standard Gaussian distribution.
\end{corollary}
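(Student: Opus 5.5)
The plan is to derive the statement from the forward interpolation \eqref{forward_process}, extended to continuous time $t\in[0,T]$, together with Anderson's time-reversal theorem for diffusions. The mixture representations come first and are immediate. Since $\hat{\mathbf{x}}$ is generated from $\hat{\mathbf{y}}$ along the Markov chain \eqref{CMC}, the law of total probability gives $p(\mathbf{x})=\int p(\mathbf{x}|\hat{\mathbf{y}})p(\hat{\mathbf{y}})d\hat{\mathbf{y}}$. Likewise, $q_\theta^s(\mathbf{x})$ is, by definition, the marginal of the generator output when $\hat{\mathbf{y}}\sim p(\hat{\mathbf{y}})$ is fed in as the condition. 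The substantive content is therefore to identify the two conditional laws with the time-$0$ marginals of \eqref{eq:true_sde} and \eqref{eq:learned_sde}.

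\textbf{Step 1 (forward SDE).} Fix $\hat{\mathbf{y}}$. Because $\boldsymbol{\epsilon}$ is independent of $(\mathbf{x}_0,\hat{\mathbf{y}})$, conditioning on $\hat{\mathbf{y}}$ gives $\mathbf{x}_t=\alpha_t\mathbf{x}_0+\sigma_t\boldsymbol{\epsilon}$ with $\mathbf{x}_0\sim p(\mathbf{x}|\hat{\mathbf{y}})$, $\alpha_t=1-t/T$ and $\sigma_t=t/T$. I will use the standard fact that the linear SDE $d\mathbf{x}_t=\mathbf{f}\,dt+g\,d\mathbf{w}_t$ with $\mathbf{f}=(\dot\alpha_t/\alpha_t)\mathbf{x}_t$ and $g^2=\frac{d\sigma_t^2}{dt}-2\frac{\dot\alpha_t}{\alpha_t}\sigma_t^2$ has exactly these Gaussian-perturbed marginals $p_t(\cdot|\hat{\mathbf{y}})$. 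This follows by solving the linear SDE: the mean scales by $\alpha_t$ and the variance ODE is matched. Substituting gives:
\begin{itemize}
\item $\mathbf{f}(\mathbf{x}_t,t)=-\mathbf{x}_t/(T-t)$;
\item $g^2(t)=\frac{2t}{T^2}+\frac{2t^2}{T^2(T-t)}=\frac{2t}{T(T-t)}$.
\end{itemize}
These match the statement. At $t=T$ we have $\mathbf{x}_T=\boldsymbol{\epsilon}$, so $p_T=\pi$ is standard Gaussian and independent of $\hat{\mathbf{y}}$.

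\textbf{Step 2 (reverse SDE).} By Anderson's theorem, the reverse-time process $d\mathbf{x}_t=[\mathbf{f}-g^2\nabla_{\mathbf{x}_t}\log p_t(\mathbf{x}_t|\hat{\mathbf{y}})]dt+g\,d\bar{\mathbf{w}}_t$, started from $p_T=\pi$, has marginal $p_t(\cdot|\hat{\mathbf{y}})$ for every $t$. In particular its law at $t=0$ is $p(\mathbf{x}|\hat{\mathbf{y}})$. Next I invert Proposition \ref{prop:optimal_solution} to express the score: $\nabla\log p_t=-\big((T-t)\boldsymbol{\nu}^*+T\mathbf{x}_t\big)/t$. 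Substituting, $-g^2\nabla\log p_t=\frac{2}{T}\boldsymbol{\nu}^*+\frac{2}{T-t}\mathbf{x}_t$. The drift therefore becomes $\frac{2}{T}\boldsymbol{\nu}^*+\frac{\mathbf{x}_t}{T-t}=\frac{2}{T}\boldsymbol{\nu}^*-\mathbf{f}$, which is \eqref{eq:true_sde}.

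\textbf{Step 3 (learned SDE).} The learned sampler is defined by plugging the network $\boldsymbol{\nu}_\theta$ in place of $\boldsymbol{\nu}^*$, which yields \eqref{eq:learned_sde}. Its $t=0$ marginal is by definition $q_\theta^s(\mathbf{x}|\hat{\mathbf{y}})$, and mixing over $p(\hat{\mathbf{y}})$ completes the argument.

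\textbf{Main obstacle.} The hard part is technical rather than algebraic. The coefficients are singular at both endpoints: $\mathbf{f}$ and $g^2$ blow up as $t\to T$ because $\alpha_T=0$, and the score rewriting divides by $t$ at $t=0$. Beyond that, the paper's discrete grid $t\in\{1,\dots,T\}$ must be embedded into continuous time. My first approach would be to work on $[\delta,T-\delta]$, apply Anderson's theorem there under standard regularity of $p_t(\cdot|\hat{\mathbf{y}})$, and pass to the limit $\delta\to0$ by continuity of the marginals in $t$. For $t>0$ the marginals are smooth with positive density because of the Gaussian convolution, so the score is well defined there. The statement is then understood in this limiting sense.
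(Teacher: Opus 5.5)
Your proposal is correct and follows the same route as the paper's proof. Both arguments identify the forward SDE with $\mathbf{f}(\mathbf{x}_t,t)=-\mathbf{x}_t/(T-t)$ and $g(t)^2=\frac{2t}{T(T-t)}$, apply Anderson's time reversal conditionally on $\hat{\mathbf{y}}$, substitute the score written in terms of $\boldsymbol{\nu}^*$ via Proposition~\ref{prop:optimal_solution} to obtain \eqref{eq:true_sde}, and then obtain \eqref{eq:learned_sde} by replacing $\boldsymbol{\nu}^*$ with $\boldsymbol{\nu}_\theta$. You differ from the paper in only two ways: you derive $\mathbf{f}$ and $g$ explicitly by matching the Gaussian marginals $\mathcal{N}(\alpha_t\mathbf{x}_0,\sigma_t^2\mathbf{I})$ (which makes clear that only the one-time marginals of the interpolation need to agree with the SDE) where the paper simply cites a discrete-to-continuous limit, and you flag the endpoint singularities at $t=0$ and $t=T$ that the paper passes over.
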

\begin{proof}
  According to the forward process defined in (\ref{forward_process}), the discrete process converges to the following SDE as the number of sampling steps approaches infinity \cite{Score-DM}:
  \begin{equation}
    d\mathbf{x}_t = \mathbf{f}(\mathbf{x}_t, t)dt + g(t) d\mathbf{w}_t, \quad  \mathbf{x}_0 \sim p(\mathbf{x}|\hat{\mathbf{y}}),
  \end{equation}
where marginal distribution at $t=0$ is $p(\mathbf{x}|\hat{\mathbf{y}})$. According to Anderson's theorem, its time reversal SDE is 
  \begin{align}
   d\mathbf{x}_t = [\mathbf{f}(\mathbf{x}_t, t)-g(t)^2\nabla_{\mathbf{x}_t}\log p_t(\mathbf{x}_t|\hat{\mathbf{y}})] dt + g(t) d\bar{\mathbf{w}}_t.
  \end{align}
  Substituting the result in Proposition \ref{prop:optimal_solution} into the above equation yields (\ref{eq:true_sde}).
  Therefore, $p(\mathbf{x}|\hat{\mathbf{y}})$ is also the marginal distribution of the forward SDE \eqref{eq:true_sde} at $t=0$.
  Furthermore, the parameterized network $\boldsymbol{\nu}_\theta(\mathbf{x}_t, t, \hat{\mathbf{y}})$ is trained to approximate the optimal solution $\boldsymbol{\nu}^*(\mathbf{x}_t, t, \hat{\mathbf{y}})$. Therefore, $q_\theta^s(\mathbf{x}|\hat{\mathbf{y}})$ is given by the marginal distribution at $t=0$ of the SDE in \eqref{eq:learned_sde}.
\end{proof}

With the above two clarifications, we now establish the theoretical relationship between $D_{KL}(p(\mathbf{x})||q_\theta(\mathbf{x}))$ with $q_\theta(\mathbf{x})=q_\theta^s(\mathbf{x})$ and the training loss in (\ref{eq:gradient_loss}), in order to characterize the effectiveness of the training algorithm in minimizing the original objective in (\ref{training_objective}).
\begin{proposition}[Upper Bound of the KL divergence in the Training Process]\label{SDEKLbound}
  With the proposed  JSCGC training algorithm, the KL divergence $D_{KL}(p(\mathbf{x})||q_\theta^s(\mathbf{x}))$ is upper bounded by:
\[
\int_0^T \frac{2}{g(t)^2 T^2} \mathbb{E}_{p_t(\mathbf{x}_t, \hat{\mathbf{y}})} \left[ \big\| \boldsymbol{\nu}^*(\mathbf{x}_t, t, \hat{\mathbf{y}}) - \boldsymbol{\nu}_\theta(\mathbf{x}_t, t, \hat{\mathbf{y}}) \big\|^2 \right] dt
\]
\end{proposition}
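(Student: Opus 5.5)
The plan is to follow the standard Girsanov-based argument for score-based models, applied conditionally on $\hat{\mathbf{y}}$, and then remove the conditioning by convexity of the KL divergence. By Corollary~\ref{cor:SDE_derivation}, $p(\mathbf{x})$ and $q_\theta^s(\mathbf{x})$ are mixtures over the same $p(\hat{\mathbf{y}})$ of the time-$0$ marginals of the reverse SDEs \eqref{eq:true_sde} and \eqref{eq:learned_sde}. Joint convexity of KL (equivalently, the chain rule with identical $\hat{\mathbf{y}}$-marginals) therefore gives
\begin{align}
D_{KL}(p(\mathbf{x})\|q_\theta^s(\mathbf{x})) \le \mathbb{E}_{p(\hat{\mathbf{y}})}\big[ D_{KL}(p(\mathbf{x}|\hat{\mathbf{y}})\|q_\theta^s(\mathbf{x}|\hat{\mathbf{y}})) \big].
\end{align}
It thus suffices to bound the conditional KL for each fixed $\hat{\mathbf{y}}$ and then average.

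For fixed $\hat{\mathbf{y}}$, I will apply the data-processing inequality to path measures. Let $\mathbb{P}$ and $\mathbb{Q}_\theta$ be the laws on $C([0,T])$ of the solutions of \eqref{eq:true_sde} and \eqref{eq:learned_sde}, both started from $\pi(\mathbf{x}_T)$. Mapping a path to its value at $t=0$ can only decrease KL, so
\begin{align}
D_{KL}(p(\mathbf{x}|\hat{\mathbf{y}})\|q^s_\theta(\mathbf{x}|\hat{\mathbf{y}})) \le D_{KL}(\mathbb{P}\|\mathbb{Q}_\theta).
\end{align}
The two SDEs share the initial law and the diffusion coefficient $g(t)$. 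Their drifts differ by $\frac{2}{T}(\boldsymbol{\nu}^*-\boldsymbol{\nu}_\theta)$. Girsanov's theorem then gives
\begin{align}
D_{KL}(\mathbb{P}\|\mathbb{Q}_\theta)=\frac12\int_0^T \frac{1}{g(t)^2}\,\mathbb{E}_{p_t(\mathbf{x}_t|\hat{\mathbf{y}})}\Big[\big\|\tfrac{2}{T}(\boldsymbol{\nu}^*-\boldsymbol{\nu}_\theta)\big\|^2\Big]dt .
\end{align}
Here the expectation is taken under $\mathbb{P}$, whose time-$t$ marginal is $p_t(\mathbf{x}_t|\hat{\mathbf{y}})$ because \eqref{eq:true_sde} is the exact time reversal of the forward process. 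The prefactor simplifies as $\frac12\cdot\frac{4}{T^2}=\frac{2}{T^2}$, which yields the integrand $\frac{2}{g(t)^2T^2}\|\boldsymbol{\nu}^*-\boldsymbol{\nu}_\theta\|^2$. Averaging over $p(\hat{\mathbf{y}})$ merges $p_t(\mathbf{x}_t|\hat{\mathbf{y}})p(\hat{\mathbf{y}})$ into $p_t(\mathbf{x}_t,\hat{\mathbf{y}})$, and Fubini lets me exchange the $\hat{\mathbf{y}}$-expectation with the time integral. This gives exactly the stated bound. Since Proposition~\ref{prop:optimal_solution} identifies $\boldsymbol{\nu}^*$ as the minimizer of $\mathcal{L}$, I will also note that the bound equals, up to the time weighting $\frac{2}{g(t)^2T^2}$ and an additive constant independent of $\theta$, the excess training loss. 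This links the bound back to \eqref{eq:gradient_loss}.

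The main obstacle is justifying the Girsanov step rigorously. Novikov's condition, or a localization/approximation argument as in Chen et al.\ and Song et al., is needed for the Radon--Nikodym derivative to be a true martingale. In addition, $g(t)^2=\frac{2t}{T(T-t)}$ vanishes at $t=0$ and blows up as $t\to T$. The weight $1/g(t)^2$ is therefore singular near $t=0$, and the drift $\mathbf{f}$ is singular near $t=T$. I plan to assume that the integral in the statement is finite, which is implicit in the statement. I would then run the argument on $[\delta,T-\delta]$ and pass to the limit $\delta\to0$ using lower semicontinuity of KL. If this limit proves delicate, I will fall back on the cleaner statement that the inequality holds whenever the right-hand side is finite, citing the known result for reverse SDEs with matched diffusion.
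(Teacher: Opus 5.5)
Your proposal is correct and follows essentially the same route as the paper's proof. The paper also applies Girsanov to the path measures of the two reverse SDEs, first conditioning via the chain rule on the common initial point $\mathbf{x}_T\sim\pi$; your direct application makes that step unnecessary. It then uses the data-processing inequality to pass to the time-$0$ conditional marginals, removes the conditioning on $\hat{\mathbf{y}}$ through $D_{KL}(p(\mathbf{x})\|q_\theta^s(\mathbf{x}))\le \mathbb{E}_{p(\hat{\mathbf{y}})}[D_{KL}(p(\mathbf{x}|\hat{\mathbf{y}})\|q_\theta^s(\mathbf{x}|\hat{\mathbf{y}}))]$, and arrives at the same constant $\frac{1}{2}\cdot\frac{4}{T^2}=\frac{2}{T^2}$.

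Your caveats about Novikov's condition and the singular weight $1/g(t)^2$ near $t=0$ are legitimate; the paper glosses over them by simply invoking the martingale property of the It\^o integral. They add rigor without changing the argument.
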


\begin{proof}
  The proof is shown in Appendix \ref{app:SDEKLbound}.
\end{proof}

The above proposition illustrates that decreasing the proposed loss reduces an upper bound on the KL divergence. According to Proposition \ref{prop:optimal_solution}, $\boldsymbol{\nu}^*(\cdot)$ is the optimal solution of $\boldsymbol{\nu}_\theta(\cdot)$. As a result, the upper bound approaches zero when the optimal solution is achieved. Therefore, minimizing the proposed loss effectively drives the minimization of the perception distance in (\ref{training_objective}) when the perceptual constraint is measured by the KL divergence.

We next analyze the relationship between maximizing the mutual information $I(\mathbf{X};\hat{\mathbf{Y}})$ and minimizing the loss function in (\ref{eq:gradient_loss}). From Proposition \ref{prop:variational_info}, minimizing the training loss is equivalent to minimizing an evidence upper bound of $\mathbb{E}_{p(\mathbf{x}_0,\hat{\mathbf{y}})}\left[-\log q_\theta(\mathbf{x}_0|\hat{\mathbf{y}})\right]$, where the upper bound is given by $\mathbb{E}_{p(\mathbf{x}_{0:T},\hat{\mathbf{y}})}\left[-\log \frac{q_\theta(\mathbf{x}_{0:T}|\hat{\mathbf{y}})}{p(\mathbf{x}_{1:T}|\mathbf{x}_0, \hat{\mathbf{y}})}\right]$.
Therefore, we obtain
\begin{align}
I(\mathbf{X};\hat{\mathbf{Y}}) \geq H(\mathbf{X})-\mathbb{E}_{p(\mathbf{x}_{0:T},\hat{\mathbf{y}})}\left[-\log \frac{q_\theta(\mathbf{x}_{0:T}|\hat{\mathbf{y}})}{p(\mathbf{x}_{1:T}|\mathbf{x}_0, \hat{\mathbf{y}})}\right].
\end{align}
Since the source entropy $H(\mathbf{X})$ is fixed, minimizing the loss in (\ref{eq:gradient_loss}) effectively maximizes a tractable lower bound on the mutual information $I(\mathbf{X};\hat{\mathbf{Y}})$. Furthermore, in the proposed framework, $\hat{\mathbf{y}}$ is encoded by $E_\phi(\cdot)$ and transmitted through the wireless channel. Since $I(\mathbf{X};\hat{\mathbf{Y}})$ is generally close to zero under randomly initialized encoder parameters $\phi$, maximizing this lower bound encourages the encoder to preserve more information about $\mathbf{X}$ in $\hat{\mathbf{Y}}$.

\subsection{Sampling Algorithm of the JSCGC}
After training, the receiver generates samples conditioned on the received signal $\hat{\mathbf{y}}$, which carries information about the source signal $\mathbf{x}$. Owing to the proposed JSCGC framework, JSCGC naturally learns a controllable generator characterized by the learned SDE in (\ref{eq:learned_sde}). As shown in Proposition \ref{SDEKLbound}, the generated samples can achieve high perceptual quality by approximating the target data distribution. However, sampling from the SDE remains computationally expensive due to its stochastic nature and typically requires a large number of discretization steps (e.g., 1000 steps), resulting in substantial latency. Fortunately, it has been shown in \cite{Score-DM} that each SDE is associated with a probability flow ODE that shares the same marginal distributions at all time instances. Therefore, instead of simulating the stochastic process, samples can be generated by solving the corresponding ODE using efficient numerical solvers with significantly fewer steps.

According to the theorem of SDE, with the definitions of $\mathbf f(\mathbf{x}_t,t)$ and $g(t)$, the true ODE of the true SDE in \eqref{eq:true_sde} is
\begin{align}\label{eq:true_ode}
  \frac{d\mathbf{x}_t}{dt}=\frac{1}{T}\boldsymbol{\nu}^*(\mathbf{x}_t,t,\hat{\mathbf y}),\qquad\mathbf{x}_T\sim \pi(\mathbf{x}_T).
\end{align}

Since JSCGC learns the optimal velocity field $\boldsymbol{\nu}^*(\mathbf{x}_t, t, \hat{\mathbf{y}})$ through its neural approximation $\boldsymbol{\nu}_\theta(\mathbf{x}_t, t, \hat{\mathbf{y}})$, the practical sampling process is performed using the learned ODE
\begin{equation}\label{eq:learn_ode}
   \frac{d\mathbf{x}_t}{dt} = \frac{\boldsymbol{\nu}_\theta(\mathbf{x}_t, t, \hat{\mathbf{y}})}{T},\quad \mathbf{x}_T \sim \pi(\mathbf{x}_T).
\end{equation}
The learned ODE can be efficiently solved using numerical ODE solvers, i.e., Euler method with finite iterative steps. 
\begin{remark}
The marginal equivalence between the ODE and the SDE relies on the exact score function of the underlying data distribution. In Proposition \ref{prop:optimal_solution}, this score information is implicitly characterized through the optimal $\boldsymbol{\nu}^*(\mathbf{x}_t,t,\hat{\mathbf y})$. Therefore, the probability flow ODE corresponding to the true reverse-time SDE can be rigorously established. In practice, however, only the learned approximation $\boldsymbol{\nu}_{\theta}(\mathbf{x}_t,t,\hat{\mathbf y})$ is available. Therefore, the practical sampling ODE in (\ref{eq:learn_ode}) should be viewed as an approximation to the true probability flow ODE, whose accuracy depends on how closely $\boldsymbol{\nu}_{\theta}$ approximates $\boldsymbol{\nu}^{*}$.
\end{remark}

\subsection{Theoretical Analysis of the Sampling Effectiveness}
To improve sampling efficiency, we replace the reverse process based on the SDE in \eqref{eq:learned_sde} with an ODE in \eqref{eq:learn_ode}. However, since $\boldsymbol{\nu}_\theta(\mathbf{x}_t,t,\hat{\mathbf y})$ is only an approximation to $\boldsymbol{\nu}^*(\mathbf{x}_t,t,\hat{\mathbf y})$, the learned ODE does not exactly match the true ODE. The distribution induced by the ODE does not coincide with the distribution derived during training by the SDE. Therefore, given the distributional constraint specified in the system design objective \eqref{training_objective}, it is necessary to further analyze the distance between the sample distribution produced by the ODE and the original distribution. Here, the sampling distribution $q_\theta(\mathbf{x})$ from the ODE is denoted as $q_\theta^o(\mathbf{x})$ to distinguish it from the SDE sampling distribution $q_\theta^s(\mathbf{x})$.

The true conditional marginal density $p_t(\mathbf{x}_t\mid \hat{\mathbf y})$ and the learned conditional marginal density $q_{t,\theta}^{o}(\mathbf{x}_t\mid \hat{\mathbf y})$ are induced at time $t$ by the true ODEs in \eqref{eq:true_ode} and the learned ODE in \eqref{eq:learn_ode}, respectively. 
The true data distribution $p(\mathbf{x})$ and the learned data distribution $q_\theta^o(\mathbf{x})$ can be derived as
\[
    p(\mathbf{x}) = \int p_0(\mathbf{x}_0|\hat{\mathbf{y}}) p(\hat{\mathbf{y}}) d\hat{\mathbf{y}}, \quad q_\theta^o(\mathbf{x}) = \int q_{0,\theta}^o(\mathbf{x}_0|\hat{\mathbf{y}}) p(\hat{\mathbf{y}}) d\hat{\mathbf{y}}.
\]

We now analyze the distance between the true and learned distributions in terms of the KL divergence and the Wasserstein distance.
\begin{proposition}[ODE-induced KL divergence dynamics]\label{ODEKLbound}
Under the ODE sampling, the KL divergence $D_{KL}(p(\mathbf{x})||q_\theta^o(\mathbf{x}))$ is upper-bounded by
\begin{align}\label{eq:ode_kl_dynamics_residual}
&\int_{0}^{T}\frac{T-t}{Tt}\mathbb{E}_{p_t(\mathbf{x}_t, \hat{\mathbf y})}\left[\left\|\boldsymbol{\nu}^*(\mathbf{x}_t,t,\hat{\mathbf y})-\boldsymbol{\nu}_\theta(\mathbf{x}_t,t,\hat{\mathbf y})\right\|^2\right]\\
&+\frac{1}{T}\mathbb{E}_{p_t(\mathbf{x}_t, \hat{\mathbf y})}\left[\left(\boldsymbol{\nu}^*(\mathbf{x}_t,t,\hat{\mathbf y})-\boldsymbol{\nu}_\theta(\mathbf{x}_t,t,\hat{\mathbf y})\right)^\top\mathbf r_\theta(\mathbf{x}_t,t,\hat{\mathbf y})\right]\,dt,\nonumber
\end{align}
where
$\mathbf r_\theta(\cdot):=\nabla_{\mathbf{x}_t}\log q_{t,\theta}^{o}(\mathbf{x}_t\mid \hat{\mathbf y})+\frac{T-t}{t}\boldsymbol{\nu}_\theta(\mathbf{x}_t,t,\hat{\mathbf y})+\frac{T}{t}\mathbf{x}_t.$
\end{proposition}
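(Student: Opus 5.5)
The plan is to follow the standard argument for bounding the KL divergence between two probability-flow ODEs. First we pass from the mixtures to the conditionals. Then we differentiate the conditional KL along time using the continuity equations. Finally, we substitute the closed form of the score given by Proposition~\ref{prop:optimal_solution}.

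\textbf{Step 1 (reduction to conditionals).} Both $p(\mathbf{x})$ and $q_\theta^o(\mathbf{x})$ are mixtures over the same $p(\hat{\mathbf y})$. By the chain rule for KL divergence (equivalently, joint convexity),
\[
D_{KL}(p(\mathbf{x})\|q_\theta^o(\mathbf{x}))\le \mathbb{E}_{p(\hat{\mathbf y})}\big[D_{KL}(p_0(\cdot|\hat{\mathbf y})\|q^o_{0,\theta}(\cdot|\hat{\mathbf y}))\big].
\]
It therefore suffices to bound the conditional KL for each fixed $\hat{\mathbf y}$. At $t=T$ both conditionals equal $\pi$, so the conditional KL vanishes there. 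We can then write
\[
D_{KL}(p_0\|q^o_{0,\theta})=-\int_0^T \tfrac{d}{dt}D_{KL}(p_t\|q^o_{t,\theta})\,dt.
\]

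\textbf{Step 2 (KL dynamics).} The densities $p_t$ and $q^o_{t,\theta}$ satisfy the continuity equations $\partial_t p_t=-\nabla\cdot(p_t\mathbf u^*)$ and $\partial_t q^o_{t,\theta}=-\nabla\cdot(q^o_{t,\theta}\mathbf u_\theta)$, with $\mathbf u^*=\boldsymbol\nu^*/T$ and $\mathbf u_\theta=\boldsymbol\nu_\theta/T$. For $p_t$ we use that the true ODE \eqref{eq:true_ode} reproduces the marginals $p_t$. Differentiating $\int p_t\log(p_t/q^o_{t,\theta})$ and integrating by parts gives
\[
\tfrac{d}{dt}D_{KL}=\mathbb{E}_{p_t}\big[(\mathbf u^*-\mathbf u_\theta)^\top(\nabla\log p_t-\nabla\log q^o_{t,\theta})\big],
\]
up to the sign convention, which I will verify carefully. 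This integration by parts needs boundary terms to vanish at infinity. I will state this as a regularity assumption, namely sufficient decay of $p_t$, $q_{t,\theta}^o$ and the velocity fields. The assumption is standard in the literature on maximum-likelihood training of score-based models.

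\textbf{Step 3 (substitution).} Inverting Proposition~\ref{prop:optimal_solution} gives
\[
\nabla_{\mathbf{x}_t}\log p_t(\mathbf{x}_t|\hat{\mathbf y})=-\frac{(T-t)\boldsymbol\nu^*+T\mathbf{x}_t}{t}.
\]
By the definition of $\mathbf r_\theta$,
\[
\nabla\log q^o_{t,\theta}=\mathbf r_\theta-\tfrac{T-t}{t}\boldsymbol\nu_\theta-\tfrac{T}{t}\mathbf{x}_t.
\]
Subtracting, the $\mathbf{x}_t$ terms cancel, which leaves
\[
\nabla\log p_t-\nabla\log q^o_{t,\theta}=-\tfrac{T-t}{t}(\boldsymbol\nu^*-\boldsymbol\nu_\theta)-\mathbf r_\theta.
\]
Plugging this into Step 2 and multiplying by the factor $1/T$ from the velocities produces
\[
-\tfrac{d}{dt}D_{KL}=\mathbb{E}_{p_t}\Big[\tfrac{T-t}{Tt}\|\boldsymbol\nu^*-\boldsymbol\nu_\theta\|^2+\tfrac1T(\boldsymbol\nu^*-\boldsymbol\nu_\theta)^\top\mathbf r_\theta\Big].
\]
Integrating over $[0,T]$ and averaging over $p(\hat{\mathbf y})$ yields exactly \eqref{eq:ode_kl_dynamics_residual}. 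The expectation over $p(\hat{\mathbf y})$ combined with $p_t(\cdot|\hat{\mathbf y})$ gives the joint $p_t(\mathbf{x}_t,\hat{\mathbf y})$.

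\textbf{Main obstacle.} The main obstacle is Step 2. Two things must be handled carefully there. The first is the orientation of time: the ODEs are integrated from $T$ down to $0$, so the sign of the derivative must match the one needed in Step 3. The second is justifying the integration by parts, including the vanishing boundary terms and the differentiability of $\log q^o_{t,\theta}$. The singular weight $(T-t)/t$ near $t=0$ also needs attention. I would handle it by integrating over $[\delta,T]$ and letting $\delta\to0$, assuming the integrand is integrable.
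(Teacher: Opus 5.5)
Your proposal is correct and follows essentially the same route as the paper. Both arguments take the same steps: the continuity equations for the two ODE marginals; the integration-by-parts identity $\frac{d}{dt}D_{KL}=\frac1T\mathbb{E}_{p_t}[(\boldsymbol\nu^*-\boldsymbol\nu_\theta)^\top\nabla\log(p_t/q^o_{t,\theta})]$; substitution of the score from Proposition~\ref{prop:optimal_solution} to produce the $\frac{T-t}{t}$ and $\mathbf r_\theta$ terms; integration over $[0,T]$ with vanishing KL at $t=T$; and the mixture bound over $p(\hat{\mathbf y})$. Your signs check out, and your explicit regularity conditions (vanishing boundary terms, truncation at $t=\delta$ for the $1/t$ singularity) are assumptions the paper's proof uses without stating them.
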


\begin{proof}
  The proof is shown in Appendix \ref{app:ODEKLbound}.
\end{proof}

The proposition shows that the upper bound of the KL divergence between the sample distribution induced by the ODE and the true data distribution can also be minimized by the training loss of JSCGC. Although it contains a residual term $\mathbf r_\theta(\cdot)$, we observe that when $\boldsymbol{\nu}_\theta(\cdot)=\boldsymbol{\nu}^*(\cdot)$, this upper bound equals zero. Therefore, sample generation via the ODE-based reverse process still preserves high perceptual quality.

Although the KL divergence cannot be directly obtained in this case, we can instead provide an upper bound for $d_p(p(\mathbf{x}),q_{\theta}^{o}(\mathbf{x}))$ in terms of the 2-Wasserstein distance ($W_2$) under the Lipschitz assumption. This allows us to quantify the extent to which the samples generated by the reverse ODE satisfy the distributional constraint specified in \eqref{training_objective}.
\begin{proposition}[$W_2$ distance bound for the ODE-induced marginal distributions]\label{ODEW2bound}
Assume that, uniformly over \(\hat{\mathbf y}\), the mapping
\(\mathbf x_t\mapsto \boldsymbol\nu_\theta(\mathbf x_t,t,\hat{\mathbf y})\)
is \(L_p(t)\)-Lipschitz. The $W_2$ distance between $p(\mathbf{x})$ and $q_{\theta}^{o}(\mathbf{x})$ satisfies
\begin{align}
W_2^2&\left(p(\mathbf{x}),q_{\theta}^{o}(\mathbf{x})\right)
\le\int_0^T\exp\!\left(\int_0^s \frac{2L_p(\tau)+1}{T}\,d\tau\right)\nonumber\\
&\cdot\frac{1}{T}\mathbb{E}_{p_s(\mathbf{x}_s, \hat{\mathbf y})}\left[\left\|\boldsymbol{\nu}^*(\mathbf{x}_s,s,\hat{\mathbf y})-\boldsymbol{\nu}_{\theta}(\mathbf{x}_s,s,\hat{\mathbf y})\right\|^2\right]ds.
\label{eq:w2_bound_conditional}
\end{align}
\end{proposition}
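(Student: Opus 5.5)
We would prove the bound by a synchronous coupling of the true ODE \eqref{eq:true_ode} and the learned ODE \eqref{eq:learn_ode}, run with the same initial point and the same received signal, followed by a Gr\"onwall argument. Fix $\hat{\mathbf y}$. Draw $\mathbf{x}_T\sim\pi(\mathbf{x}_T)$ and let $\mathbf{x}_t$ and $\tilde{\mathbf{x}}_t$ solve \eqref{eq:true_ode} and \eqref{eq:learn_ode} from this common initial condition. By the marginal-preservation property of the probability flow ODE, $\mathbf{x}_t\sim p_t(\cdot\mid\hat{\mathbf y})$, and by definition $\tilde{\mathbf{x}}_t\sim q^o_{t,\theta}(\cdot\mid\hat{\mathbf y})$. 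Drawing $\hat{\mathbf y}\sim p(\hat{\mathbf y})$ and sharing it between the two flows, the pair $(\mathbf{x}_0,\tilde{\mathbf{x}}_0)$ is a coupling of the mixtures $p(\mathbf{x})$ and $q^o_\theta(\mathbf{x})$. Hence
\[
W_2^2\big(p(\mathbf{x}),q^o_\theta(\mathbf{x})\big)\le \mathbb{E}\|\mathbf{x}_0-\tilde{\mathbf{x}}_0\|^2 .
\]
So it suffices to control $e(t):=\mathbb{E}\|\mathbf{x}_t-\tilde{\mathbf{x}}_t\|^2$ with $e(T)=0$.

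To keep the bound in the forward direction, we reparametrize time by $s=T-t$, so the integration runs from $s=0$ (the noise end) to $s=T$. Let $\boldsymbol\delta=\mathbf{x}-\tilde{\mathbf{x}}$. Differentiating gives
\[
\frac{d}{ds}\|\boldsymbol\delta\|^2=-\frac{2}{T}\boldsymbol\delta^\top\big(\boldsymbol\nu^*(\mathbf{x})-\boldsymbol\nu_\theta(\tilde{\mathbf{x}})\big).
\]
We split the velocity difference as
\[
\boldsymbol\nu^*(\mathbf{x})-\boldsymbol\nu_\theta(\tilde{\mathbf{x}})=\big[\boldsymbol\nu^*(\mathbf{x})-\boldsymbol\nu_\theta(\mathbf{x})\big]+\big[\boldsymbol\nu_\theta(\mathbf{x})-\boldsymbol\nu_\theta(\tilde{\mathbf{x}})\big].
\]
The Lipschitz assumption bounds the second bracket's contribution by $\frac{2L_p}{T}\|\boldsymbol\delta\|^2$. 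Young's inequality $2ab\le a^2+b^2$ bounds the first bracket's contribution by $\frac1T\|\boldsymbol\delta\|^2+\frac1T\|\boldsymbol\nu^*(\mathbf{x})-\boldsymbol\nu_\theta(\mathbf{x})\|^2$. Taking expectations yields the differential inequality
\[
e'(s)\le \frac{2L_p+1}{T}\,e(s)+\frac1T\,\mathbb{E}_{p(\mathbf{x},\hat{\mathbf y})}\big\|\boldsymbol\nu^*-\boldsymbol\nu_\theta\big\|^2 .
\]
Because the error term is evaluated along the true trajectory $\mathbf{x}$, its expectation is exactly with respect to $p_s(\mathbf{x}_s,\hat{\mathbf y})$, which is the key reason to linearize around the true path rather than the learned one.

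Gr\"onwall's inequality with $e(0)=0$ then gives
\[
e(T)\le\int_0^T\exp\!\Big(\int_s^T\frac{2L_p(\tau)+1}{T}d\tau\Big)\frac1T\,\mathbb{E}\|\boldsymbol\nu^*-\boldsymbol\nu_\theta\|^2\,ds .
\]
Since the integrand of the inner exponent is nonnegative, $\int_s^T\le\int_0^T$, and this is dominated by the stated weight $\exp(\int_0^s\cdot)$ up to the indexing convention. We expect the main obstacle to be matching the paper's time labels: the stated bound indexes $p_s$, $L_p$, and the exponent in the paper's original time. We would therefore track which variable the Lipschitz constant and the marginals are attached to, or, if necessary, bound the exponent by the full integral $\int_0^T$, which majorizes either labeling. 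A secondary technical point is justifying the Lipschitz step uniformly in $\hat{\mathbf y}$ and assuming enough regularity for well-posedness of the ODE flows near $t=0$ and $t=T$, where $\boldsymbol\nu^*$ may be singular.
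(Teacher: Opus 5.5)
Your proposal is correct and follows essentially the same route as the paper: a synchronous coupling of the true and learned ODEs from a shared $\mathbf{x}_T$ (and shared $\hat{\mathbf y}$), the split $\boldsymbol\nu^*(\mathbf{x})-\boldsymbol\nu_\theta(\tilde{\mathbf{x}})$ around the true trajectory, then Young's inequality plus the Lipschitz bound, and finally a backward Gr\"onwall argument with zero error at $t=T$. The time-label issue you flag is not a real obstacle: substituting back $\sigma=T-s$ and $\tau\mapsto T-\tau$ turns your weight $\exp(\int_s^T\cdot)$ into exactly $\exp(\int_0^\sigma \frac{2L_p(\tau)+1}{T}\,d\tau)$ with $p_\sigma$ in the expectation, which is the stated bound verbatim, whereas your fallback of enlarging the exponent to $\int_0^T$ would prove only a weaker inequality.
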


\begin{proof}
  The proof is shown in Appendix \ref{app:ODEW2bound}.
\end{proof}
The proposition indicates that, under ODE-based sampling, the $W_2$ distance between the generated sample distribution and the true data distribution is also upper-bounded and controlled by the training loss. Therefore, with sufficient training, the ODE-generated samples remain of high quality and satisfy the perceptual constraint.

\begin{remark}[Error Behavior Shifting]\label{change}
Propositions \ref{SDEKLbound}, \ref{ODEKLbound}, and \ref{ODEW2bound} reveal a fundamental difference between JSCGC and conventional distortion-oriented communication systems. In traditional reconstruction paradigms, performance degradation caused by limited communication resources is primarily reflected as increasing reconstruction distortion, such as blurring, artifacts, or structural degradation. In contrast, JSCGC explicitly constrains the generated distribution to remain close to the natural data distribution. Consequently, even under severe communication constraints, the generated samples can maintain high perceptual realism and visual quality.

As the amount of transmitted information decreases, the dominant performance degradation gradually shifts from distortion errors to semantic inconsistency. That is, the generated content may become less faithful to the original source while still appearing realistic and visually plausible. This fact represents a transition from distortion-limited communication to generation-controlled communication, and constitutes a distinctive characteristic of the proposed JSCGC paradigm.
\end{remark}

\section{Experimental Results}\label{sec:experiments}
\begin{figure*}[t]
  \centering
  \subfigure[]{\includegraphics[width=0.32\textwidth]{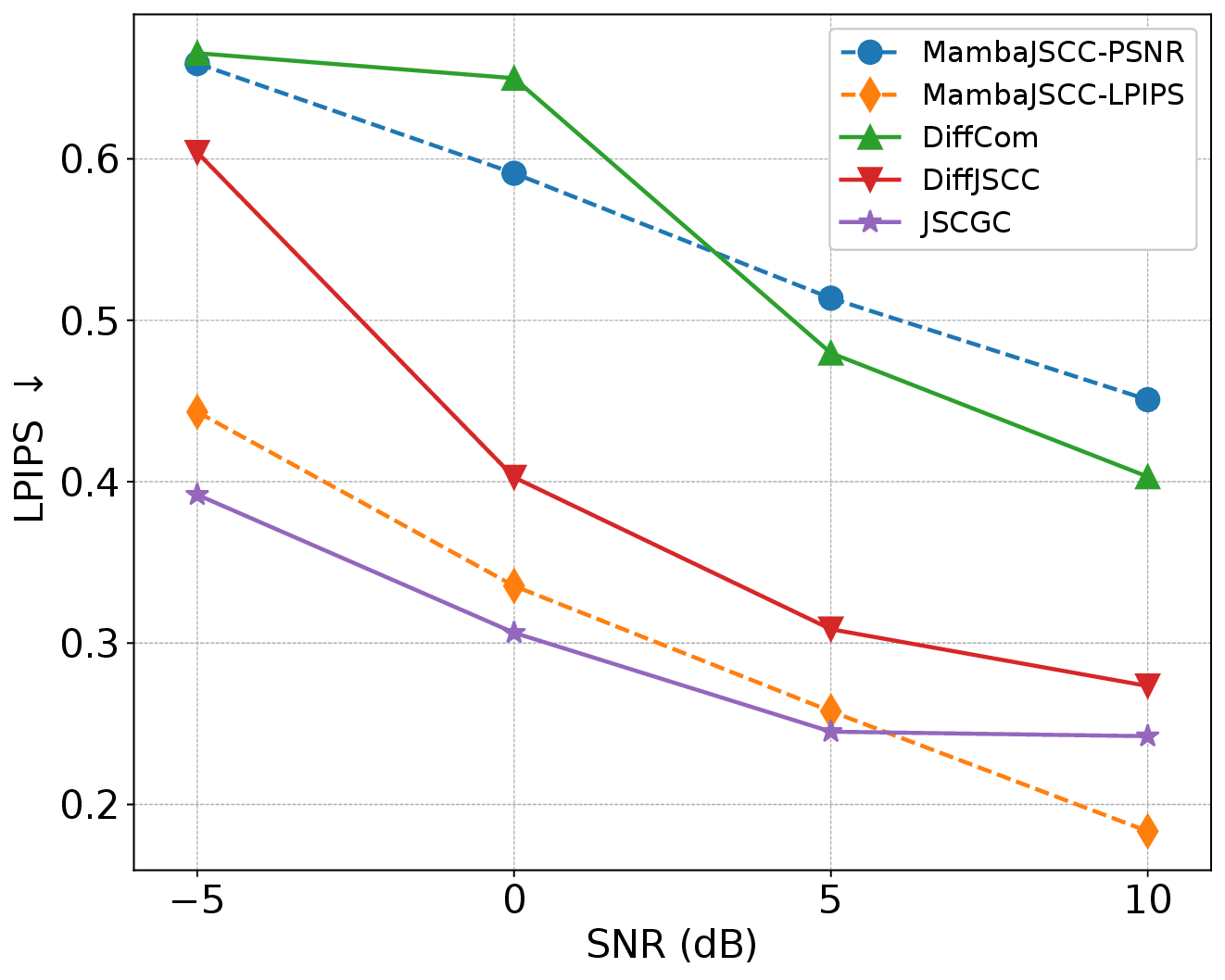}}
  \subfigure[]{\includegraphics[width=0.32\textwidth]{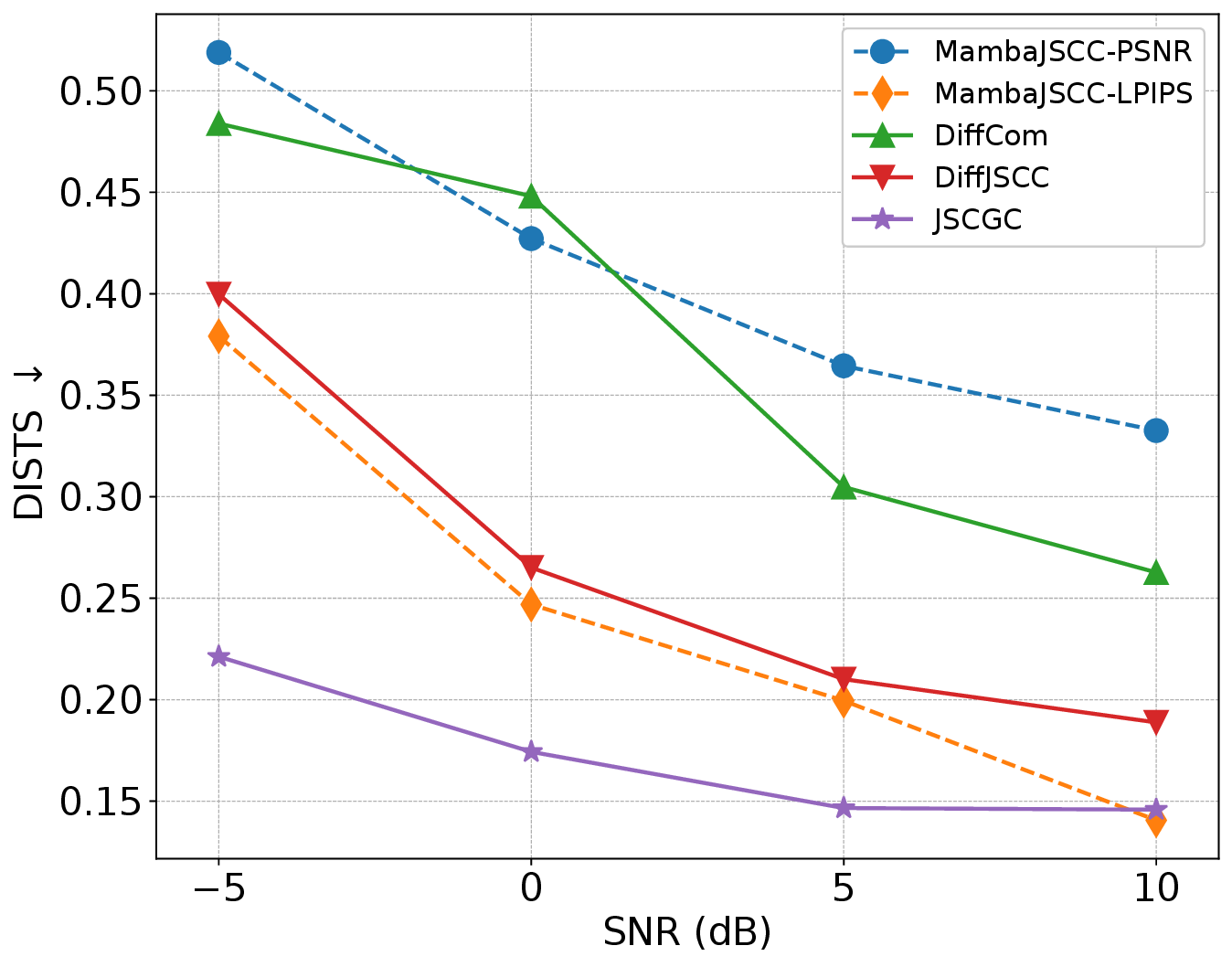}}
  \subfigure[]{\includegraphics[width=0.32\textwidth]{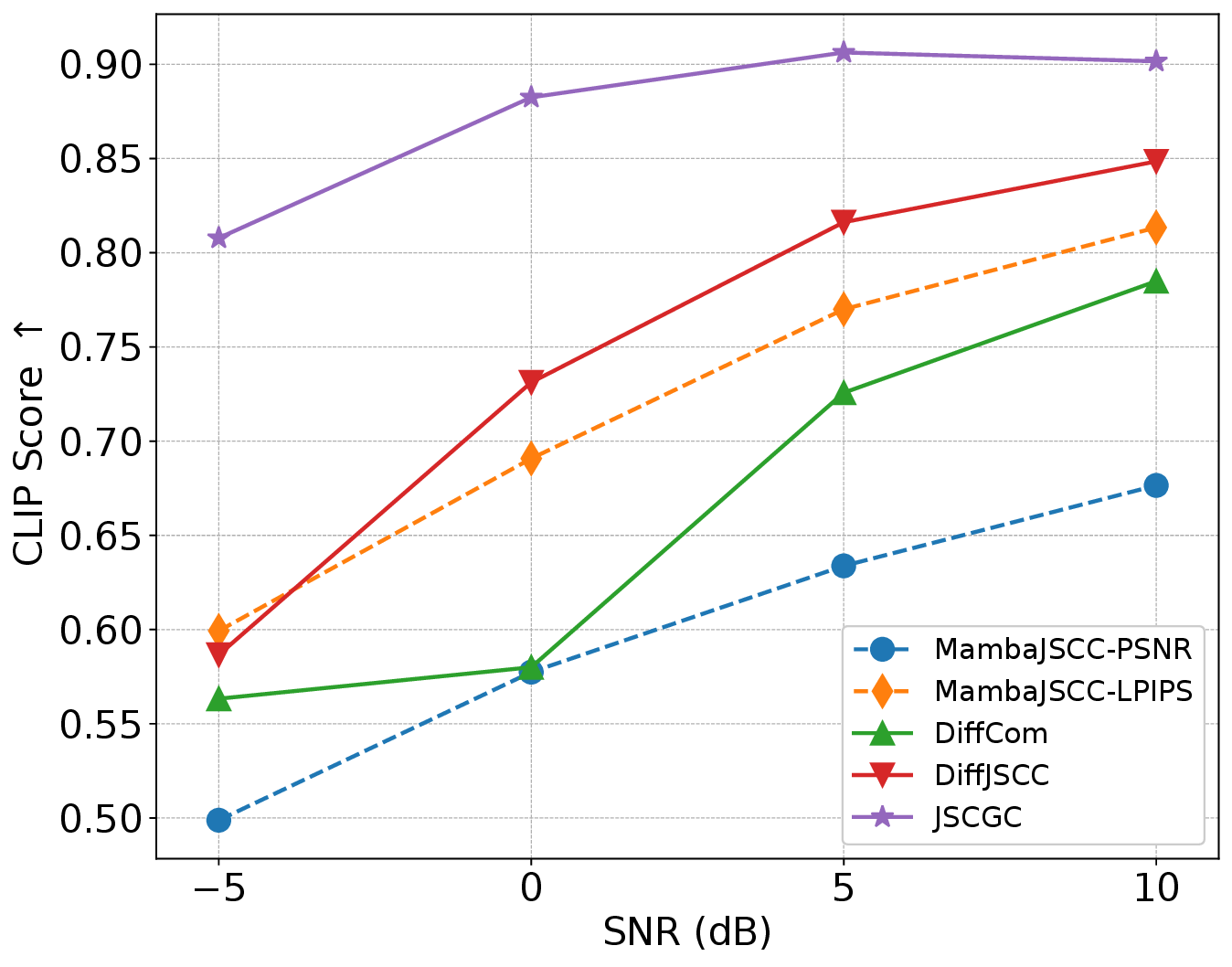}}
  \subfigure[]{\includegraphics[width=0.32\textwidth]{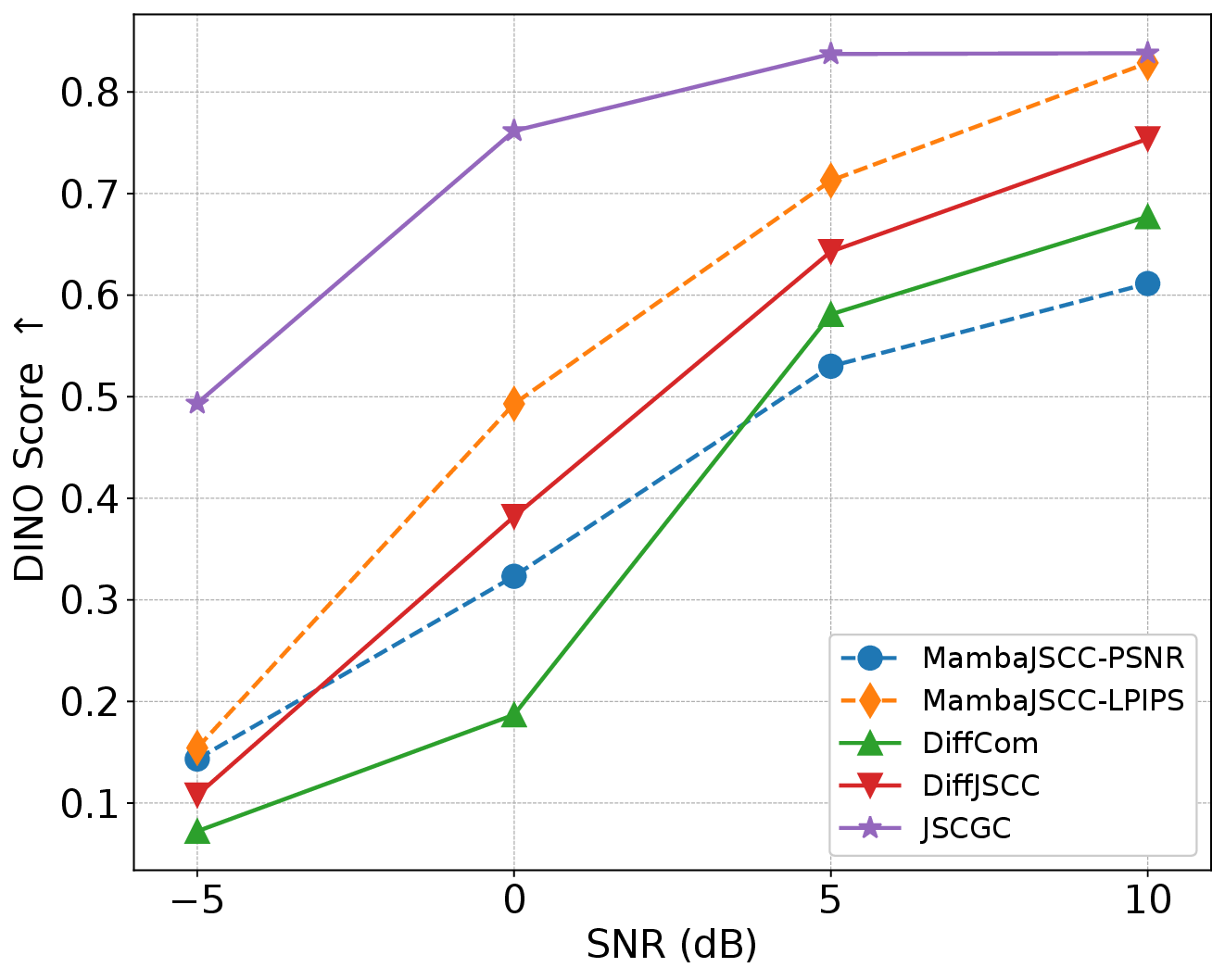}}
  \subfigure[]{\includegraphics[width=0.32\textwidth]{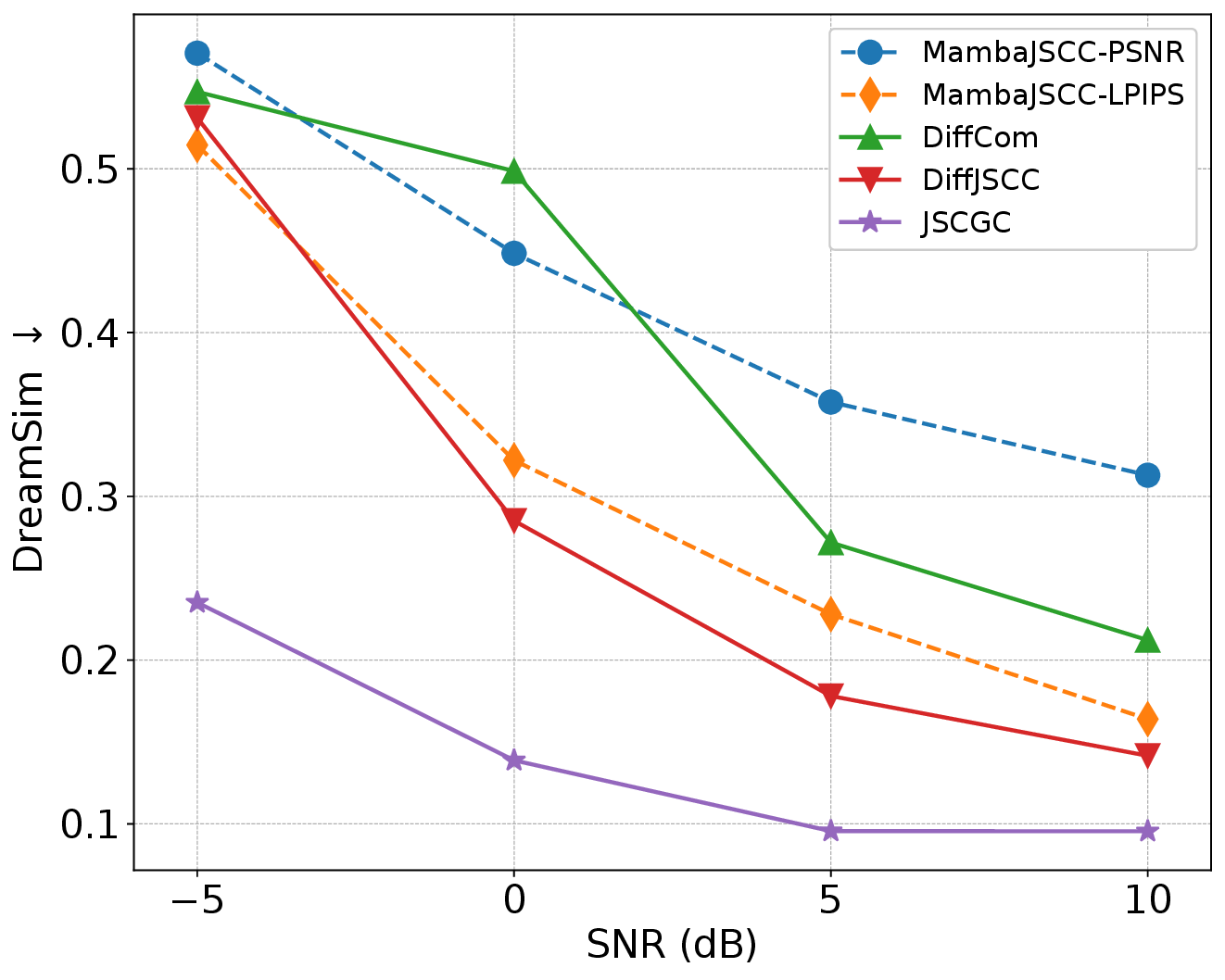}}
  \subfigure[]{\includegraphics[width=0.32\textwidth]{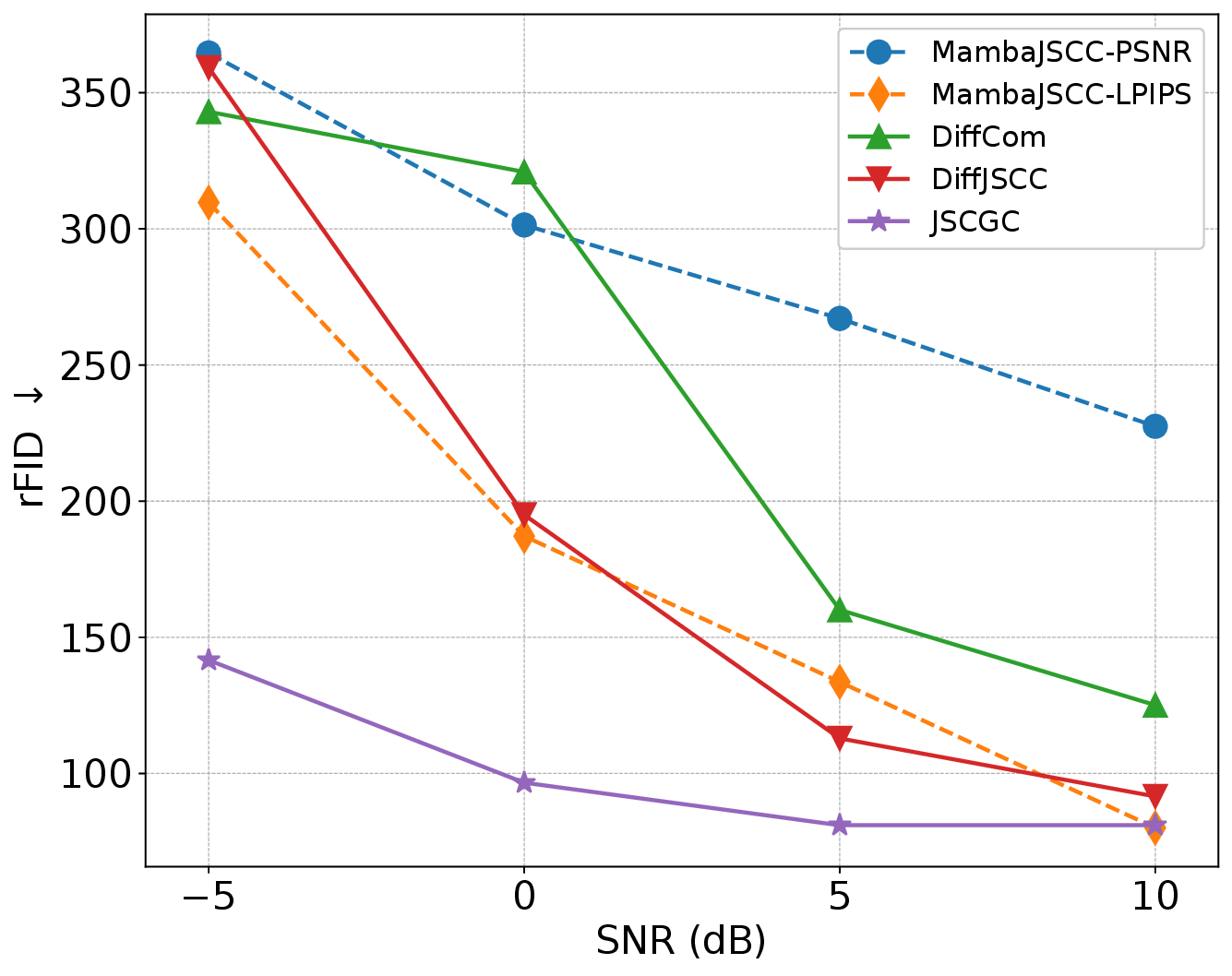}}
  \caption{The performance of different schemes versus SNR under the AWGN channel with CBR = $\frac{2}{768}$. (a) LPIPS. (b) DISTS. (c) CLIP Score. (d) DINO Score. (e) DreamSim. (f) rFID.} 
  \label{awgn_results}
  \vspace{-0.3cm}
\end{figure*}

\begin{figure}[t]
  \centering
  \includegraphics[width=0.48\textwidth]{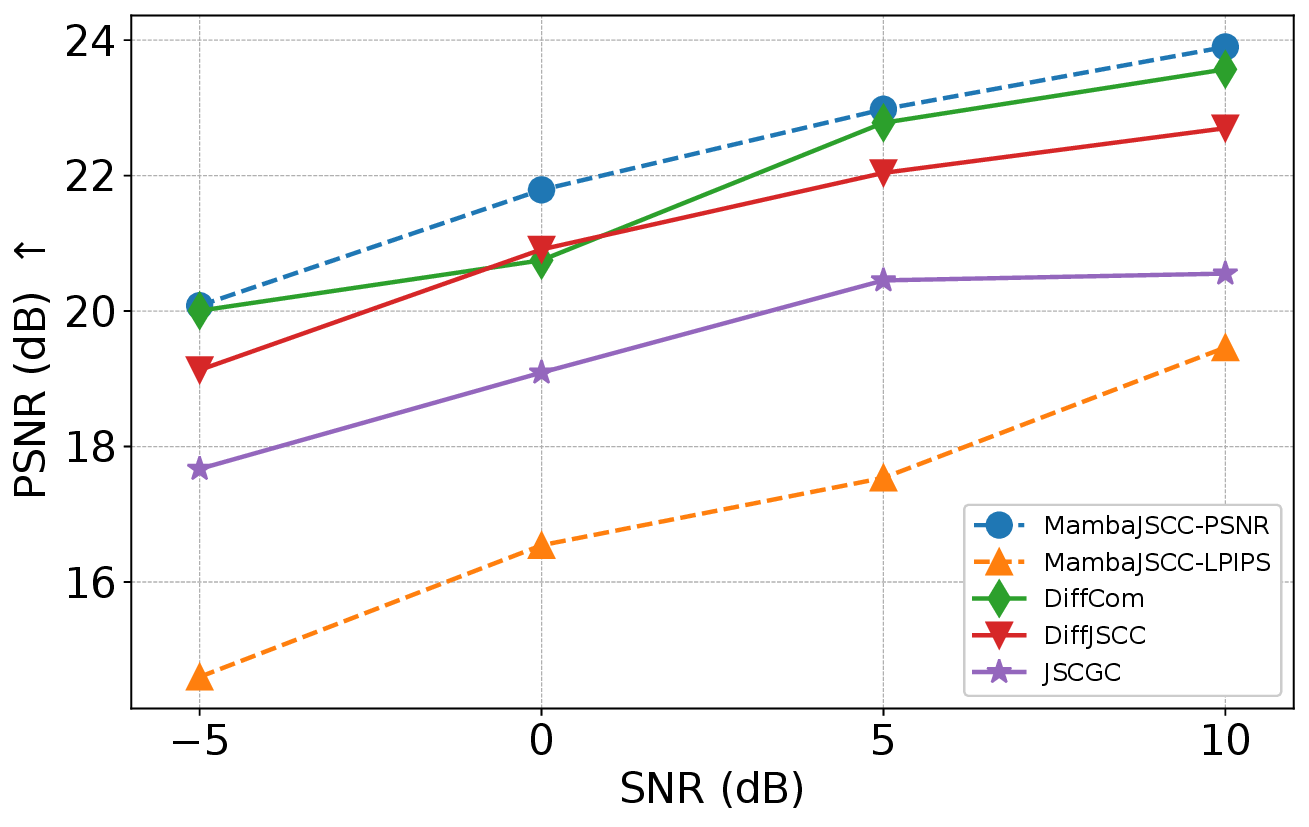}
  \caption{The PSNR performance of different schemes versus SNR under the AWGN channel with CBR = $\frac{2}{768}$.} 
  \label{PSNR_awgn_C4}
  \vspace{-0.5cm}
\end{figure}


\begin{figure*}[t]
  \centering
  \includegraphics[width=0.98\textwidth]{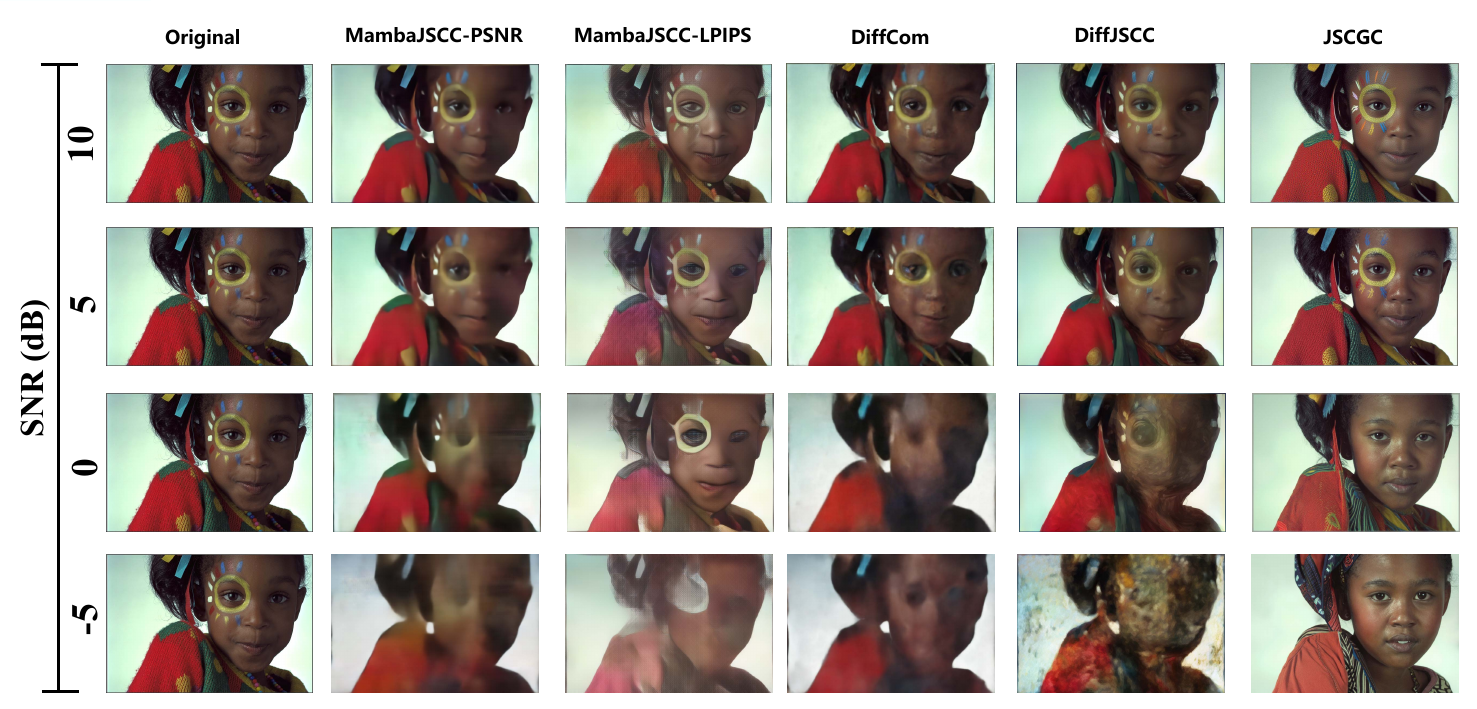}
  \caption{Visualization results under the AWGN channel with a CBR of $\frac{2}{768}$.} 
  \label{Visualizing1}
  \vspace{-0.3cm}
\end{figure*}

\begin{figure*}[t]
  \centering
  \subfigure[]{\includegraphics[width=0.32\textwidth]{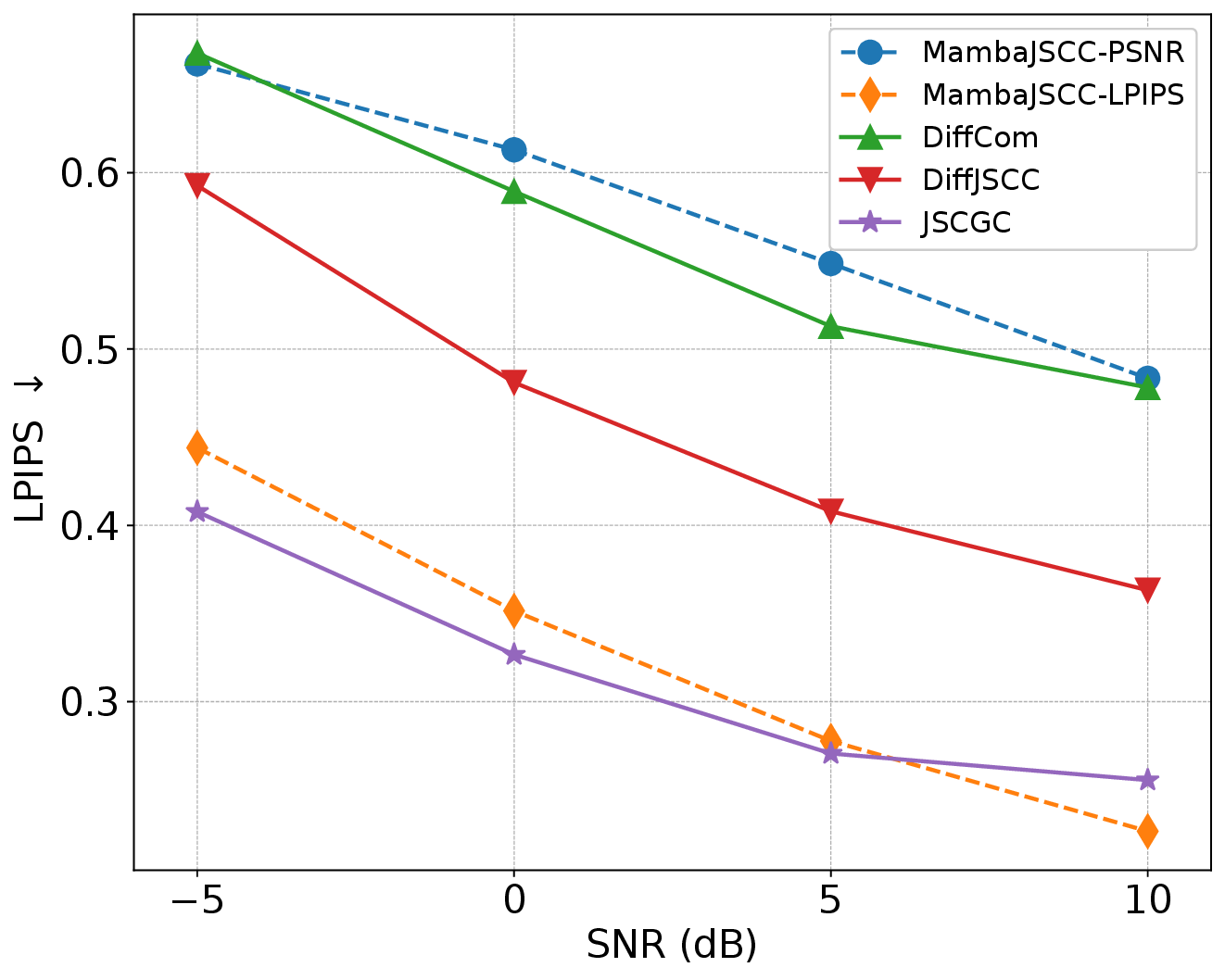}}
  \subfigure[]{\includegraphics[width=0.32\textwidth]{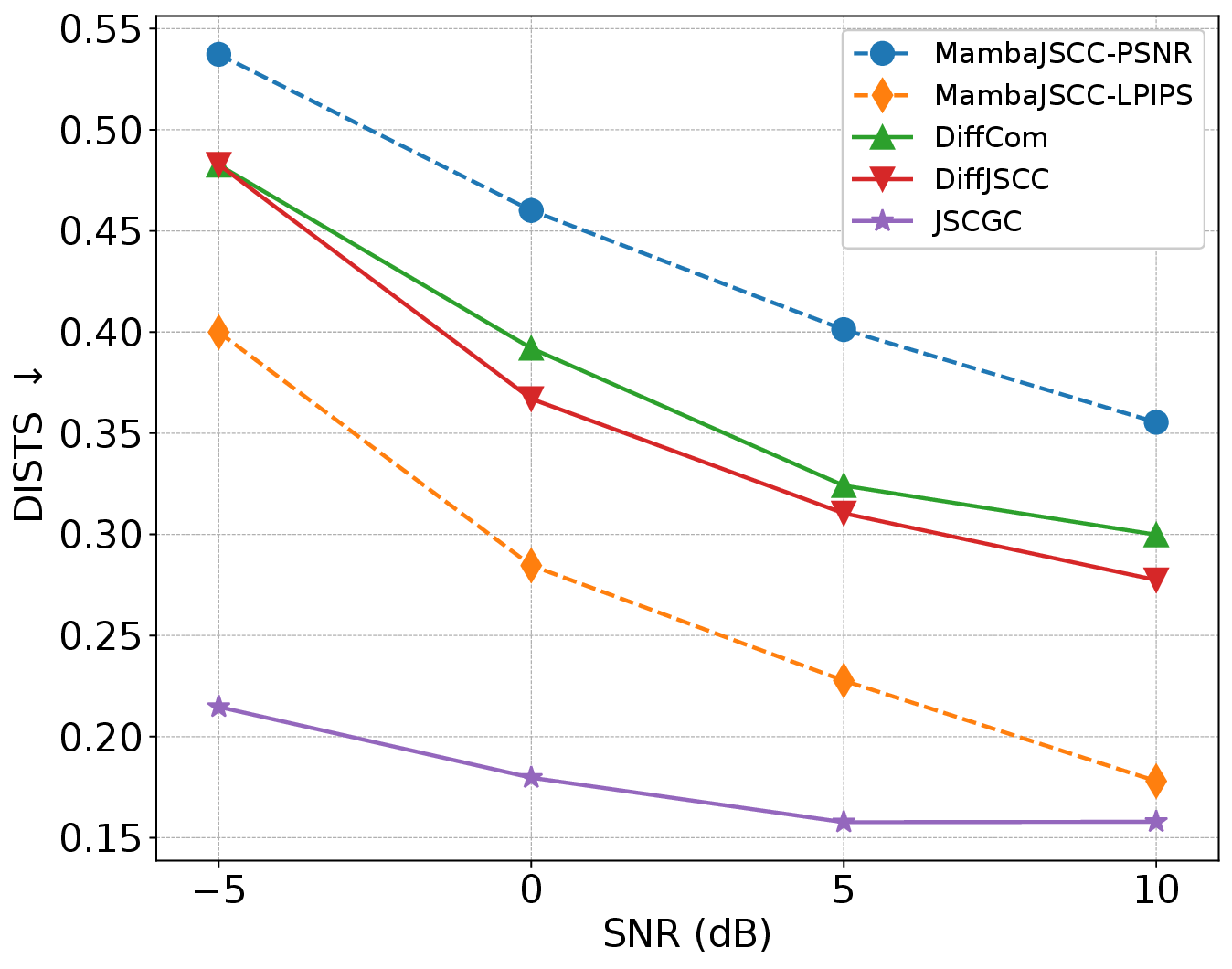}}
  \subfigure[]{\includegraphics[width=0.32\textwidth]{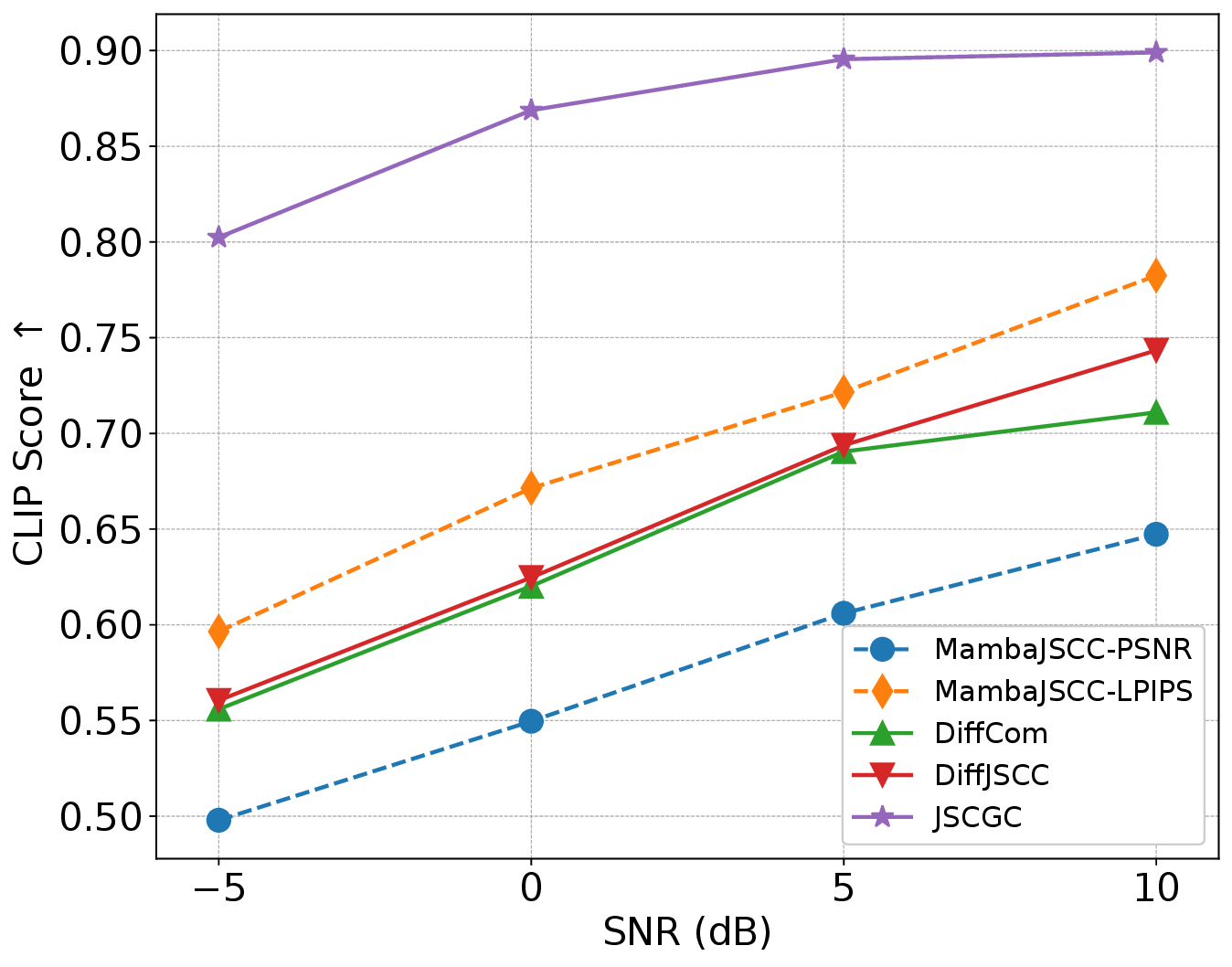}}
  \subfigure[]{\includegraphics[width=0.32\textwidth]{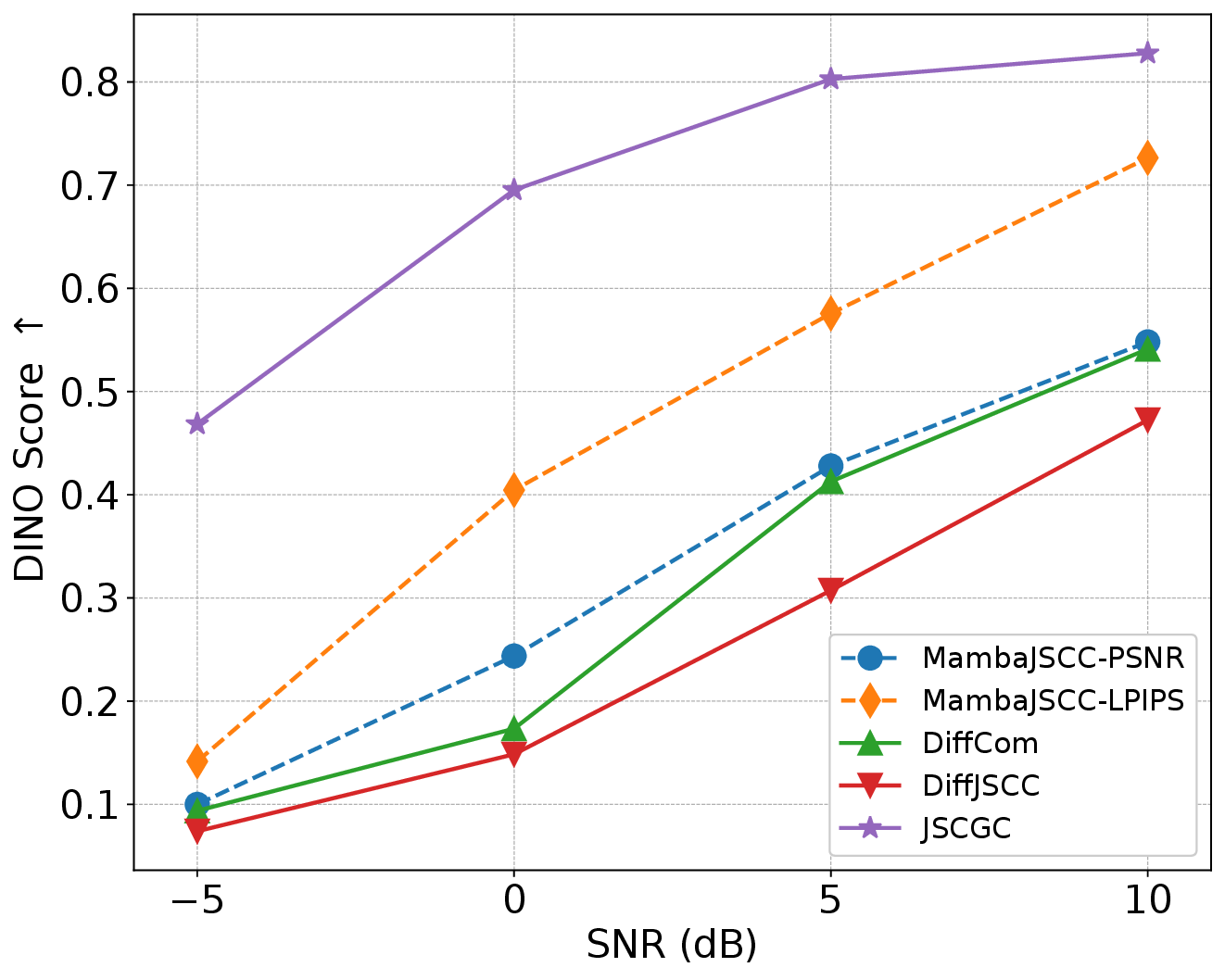}}
  \subfigure[]{\includegraphics[width=0.32\textwidth]{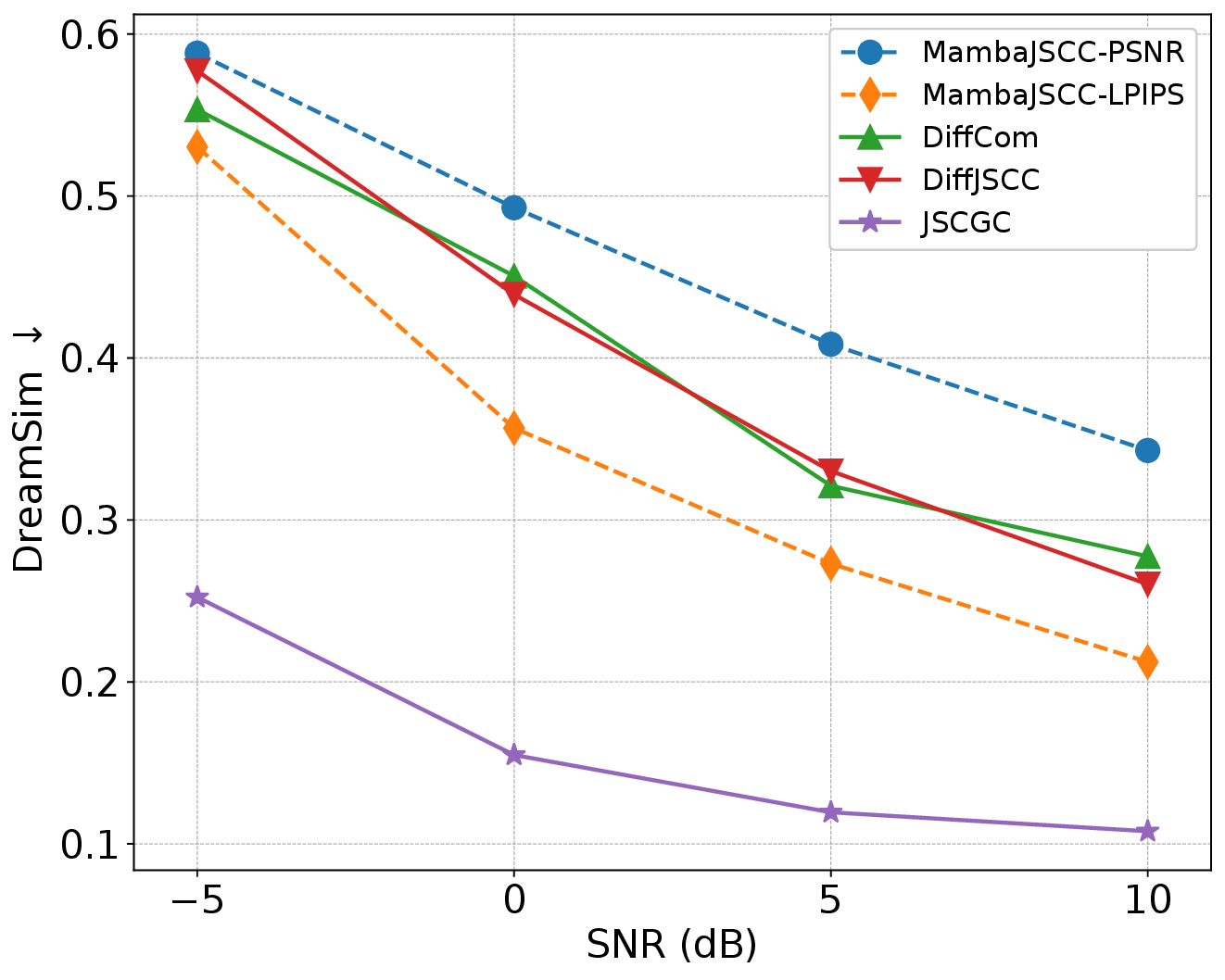}}
  \subfigure[]{\includegraphics[width=0.32\textwidth]{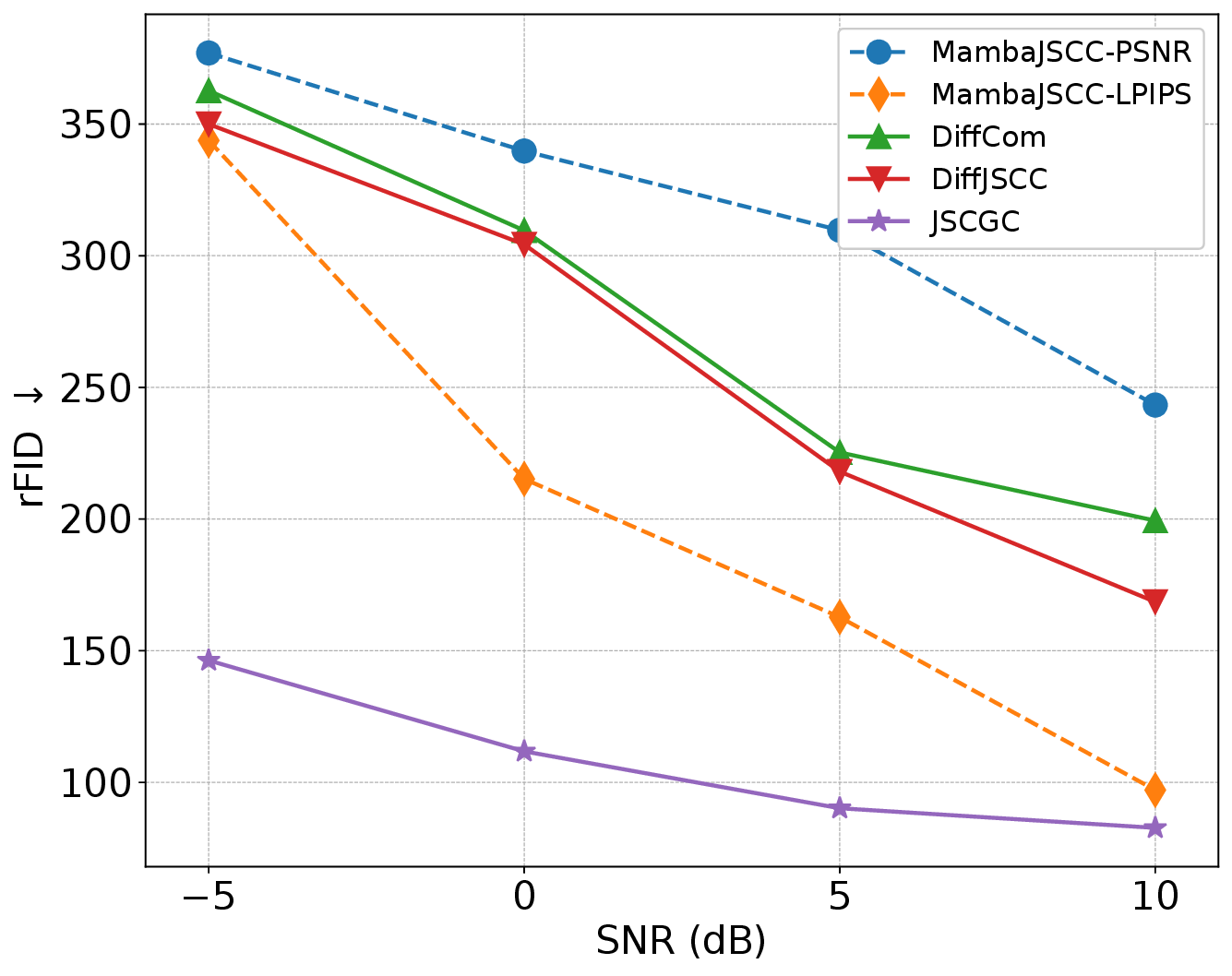}}
  \caption{The performance of different schemes versus SNR under the Rayleigh fading channel with CBR = $\frac{2}{768}$. (a) LPIPS. (b) DISTS. (c) CLIP Score. (d) DINO Score. (e) DreamSim. (f) rFID.} 
  \label{rayleigh_results}
  \vspace{-0.3cm}
\end{figure*}


\subsection{Experimental Setup}
We utilize a subset of the Open Images dataset consisting of 500k randomly sampled images for training. All images are randomly cropped to a resolution of $256 \times 256$. The Kodak dataset at its original resolution is used for evaluation.

We compare the proposed JSCGC scheme, implemented as shown in Fig. \ref{JSCGC_model}, with four representative baselines, including two reconstruction-oriented JSCC schemes and two diffusion-based generative schemes. The JSCC baselines are \textbf{MambaJSCC-PSNR} and \textbf{MambaJSCC-LPIPS} \cite{MambaJSCCWu}, trained with MSE and LPIPS losses, respectively. The generative baselines are \textbf{DiffCom} \cite{DiffCom}, where the original JSCC module is replaced by MambaJSCC and   the ``HiFi-DiffCom'' version is adopted, and \textbf{DiffJSCC} \cite{DiffJSCC}, which fine-tunes the Stable Diffusion for conditional image generation.

Experiments are conducted over both the additive white Gaussian noise (AWGN) channel and the Rayleigh fading channel. Following the setting in \cite{DiffJSCC}, the channel bandwidth ratio (CBR) is set to $\frac{n}{l} \in \left\{\frac{1}{768}, \frac{2}{768}, \frac{4}{768}, \frac{8}{768}\right\}$.

To comprehensively evaluate the proposed JSCGC scheme, we adopt a diverse set of metrics covering multiple levels of image fidelity and consistency. Specifically, we use peak signal-to-noise ratio (PSNR) as a pixel-level distortion metric; LPIPS and DISTS \cite{DISTS} as feature-level similarity metrics; CLIP Score \cite{CLIPscore} and DINO Score \cite{DINOv2} as semantic-level consistency metrics; DreamSim \cite{DreamSim} as a holistic visual similarity metric; and reconstruction FID (rFID) \cite{rFID} as a distribution-level similarity metric.
Among these metrics, DISTS prefers structural and texture consistency. CLIP Score measures the high-level semantic similarity and is computed using the ConvNext-based CLIP model. DINO Score also measures semantic consistency between image pairs, but is based on the DINOv2-ViT-L/14 model. DreamSim provides a holistic perceptual similarity measure by integrating CLIP, OpenCLIP, and DINOv2 representations. Unlike the above pairwise image similarity metrics, rFID evaluates distribution level consistency by measuring the similarity between the distribution of generated images and that of real reference images.

To accelerate convergence, the JSCGC is initialized using a pretrained Z-Image model and a pretrained MambaJSCC encoder. We first perform a joint training stage for $100$k iterations at an SNR of $10$ dB and a CBR of $\frac{60}{768}$. The resulting checkpoint is then used to initialize subsequent fine-tuning under each target channel condition for an additional $20$k iterations. All training is conducted on four NVIDIA RTX 4090 GPUs using DeepSpeed ZeRO-2 with a global batch size of $1$. 

\subsection{Results}

We first evaluate JSCGC over the AWGN channel with $CBR=\frac{2}{768}$. The results are shown in Fig.~\ref{awgn_results}, where the arrows indicate whether higher or lower values are preferred. We can observe that JSCGC consistently achieves the best performance across most metrics and SNRs, with particularly significant gains in the low SNR regime. At an SNR of $-5$ dB, JSCGC substantially outperforms all baselines. Compared with the strongest baseline, MambaJSCC-LPIPS, JSCGC improves the CLIP Score and DINO Score by 1.34$\times$ and 3.19$\times$, respectively, while reducing LPIPS, DISTS, DreamSim, and rFID by approximately 11\%, 42\%, 54\%, and 54\%. As the SNR increases, the performance gap gradually narrows; however, JSCGC still maintains the best overall performance. For example, at an SNR of $10$ dB, it continues to achieve higher semantic consistency and lower perceptual distance than DiffJSCC across all evaluation metrics. These results validate the advantage of the proposed JSCGC paradigm. Unlike DiffCom and DiffJSCC, which rely on a reconstruction-oriented decoder and optimize generation separately from encoding and transmission, JSCGC jointly optimizes source encoding, channel transmission, and generation. Consequently, the transmitted signal serves as a more informative control condition for the generator, leading to improved semantic consistency and perceptual quality, particularly under severe channel impairments.

It is worth noting that, although MambaJSCC-LPIPS achieves highly competitive quantitative performance and even surpasses JSCGC in LPIPS at an SNR of $10$ dB, the visual results in Figs. \ref{Visualizing1} reveal a different conclusion. Specifically, MambaJSCC-LPIPS often produces noticeable checkerboard artifacts, characterized by grid-like textures and grayish color bias, particularly under severe channel distortion. In contrast, such artifacts are largely absent in the generative approaches. This observation reflects the Goodhart phenomenon: once an evaluation metric becomes the optimization objective, the model may learn to exploit the metric rather than faithfully preserve natural image quality. Consequently, superior metric values do not necessarily correspond to better perceptual realism. By avoiding explicit optimization toward distortion or perceptual metrics, JSCGC achieves consistently strong performance across diverse metrics while maintaining high visual quality, indicating that its gains stem from improved generation fidelity rather than metric-specific overfitting.

Furthermore, Fig. \ref{Visualizing1} illustrates a fundamentally different degradation behavior. As the SNR decreases, all baseline methods suffer from increasingly severe blurring or artificial textures. In contrast, JSCGC preserves realistic visual appearance even under highly noisy channel conditions. The primary degradation manifests as reduced semantic consistency with the source image rather than perceptual quality loss. This observation experimentally validates the theoretical insights in Remark \ref{change}, demonstrating that JSCGC shifts communication errors from perceptual distortion to controllable semantic deviation.

In terms of PSNR, as shown in Fig. \ref{PSNR_awgn_C4}, JSCGC exhibits lower performance than reconstruction-oriented methods. This observation is consistent with the RDP theory, which characterizes the inherent trade-off between distortion minimization and perceptual quality. Since JSCGC is optimized toward perceptual fidelity rather than pixel-wise reconstruction accuracy, a reduction in PSNR is expected. Nevertheless, the results across perceptual, semantic, and distribution-level metrics consistently demonstrate that JSCGC generates images with high visual quality and strong semantic consistency.

To further evaluate the robustness of JSCGC, we conduct additional experiments under Rayleigh fading channels and different CBR settings. Fig. \ref{rayleigh_results} shows the performance versus SNR at a CBR of $2/768$. Compared with the AWGN channel scenario, JSCGC achieves even larger performance gains under Rayleigh fading channels, and these gains remain significant across the entire SNR range. For example, at an SNR of $5$ dB, JSCGC improves the CLIP Score and DINO Score from 0.6905 and 0.4128 to 0.8955 and 0.8028, respectively, while reducing DreamSim and rFID from 0.3211 and 225 to 0.1195 and 90.17, compared with DiffCom.

Fig. \ref{awgn_results_cbr} presents the performance versus CBR under the AWGN channel with an SNR of $5$ dB. JSCGC consistently outperforms the baseline methods across almost all CBR settings. The performance gain is particularly pronounced at low CBRs and gradually decreases as CBR increases. This trend is consistent with the RDP theory: as more channel resources become available, the discrepancy between distortion-oriented optimization and perceptual optimization diminishes, resulting in a smaller performance gap.

\begin{table}[t]
\centering
\caption{Computational complexity comparison.}
\label{tab:complexity}
\renewcommand{\arraystretch}{1.15}
\begin{tabularx}{\columnwidth}{lccc}
\toprule
Method & FLOPs (G) & Params (M) & GPU Memory (MB) \\
\midrule
MambaJSCC-PSNR & 245.72 & 11.69 & 1994  \\
DiffCom        & -- & 564.5 & 44440 \\
DiffJSCC       & 1867.06 & 5380.49 & 18626 \\
JSCGC          & 22753.68 & 7724.84 & 16426 \\
\bottomrule
\end{tabularx}
\end{table}

\begin{table}[t]
\centering
\caption{Comparison of sampling step, GPU time and performance.}
\label{tab:comparison}
\setlength{\tabcolsep}{0.1pt}
\renewcommand{\arraystretch}{1.3}

\begin{tabularx}{\columnwidth}{@{}>{\raggedright\arraybackslash}p{0.16\columnwidth}
                                  >{\centering\arraybackslash}X
                                   >{\centering\arraybackslash}X
                                  >{\centering\arraybackslash}X
                                   >{\centering\arraybackslash}X
                                  >{\centering\arraybackslash}X
                                   >{\centering\arraybackslash}X@{}}
\toprule
\multirow{2}{*}{Method}
& \multirow{2}{*}{\makecell{Sampling\\Step}}
& \multirow{2}{*}{\makecell{GPU\\Time (s)}}
& \multicolumn{2}{c@{}}{Performance (0 dB)} 
& \multicolumn{2}{c@{}}{Performance (10 dB)} \\
\cmidrule(l){4-5} \cmidrule(l){6-7}
& & &\makebox[\linewidth][c]{\hspace{5pt}DreamSim} & rFID  &\makebox[\linewidth][c]{\hspace{5pt}DreamSim} & rFID \\

\midrule
DiffCom  & 252 & 419.54 & 0.4986 & 320.88 &  --    &  --    \\
DiffCom  & 235 & 392.82 &   --   &   --   & 0.2122 & 125.09 \\
DiffJSCC & 50  & 9.08  & 0.2851  & 195.11 & 0.1416 & 91.57  \\
\midrule
JSCGC-10   & 10  & 5.09 & 0.1707 & 111.29 & 0.1283 & 94.19 \\
JSCGC-20   & 20  & 8.62 & 0.1490 & 98.23  & 0.1045 & 83.76 \\
JSCGC-30   & 30  & 12.94 & 0.1420 & 99.07 & 0.0979 & 84.94 \\
\textbf{JSCGC-50}   & \textbf{50}  & \textbf{21.62} & \textbf{0.1388} & \textbf{96.62} & \textbf{0.0955} & \textbf{81.01} \\
JSCGC-100  & 100 & 43.42 & 0.1390 & 95.21 & 0.0957 & 81.68\\
\bottomrule
\end{tabularx}
\end{table}

In addition to transmission performance, we evaluate the computational complexity and deployment cost of different schemes in Table \ref{tab:complexity}. FLOPs are measured using PyTorch Profiler. Compared with reconstruction-based methods, generative approaches generally require substantially larger models and higher computational cost. For example, MambaJSCC-PSNR contains only 11.69M parameters and requires 245.72G FLOPs and 1994 MB of GPU memory for a $512\times768$ image. By contrast, JSCGC adopts the large-scale Z-Image generator and therefore has the largest parameter size (7724.84M). Nevertheless, thanks to efficient inference optimization, JSCGC requires the lowest GPU memory among all generative methods, consuming only 16426 MB during inference.
\begin{figure*}[t]
  \centering
  \subfigure[]{\includegraphics[width=0.32\textwidth]{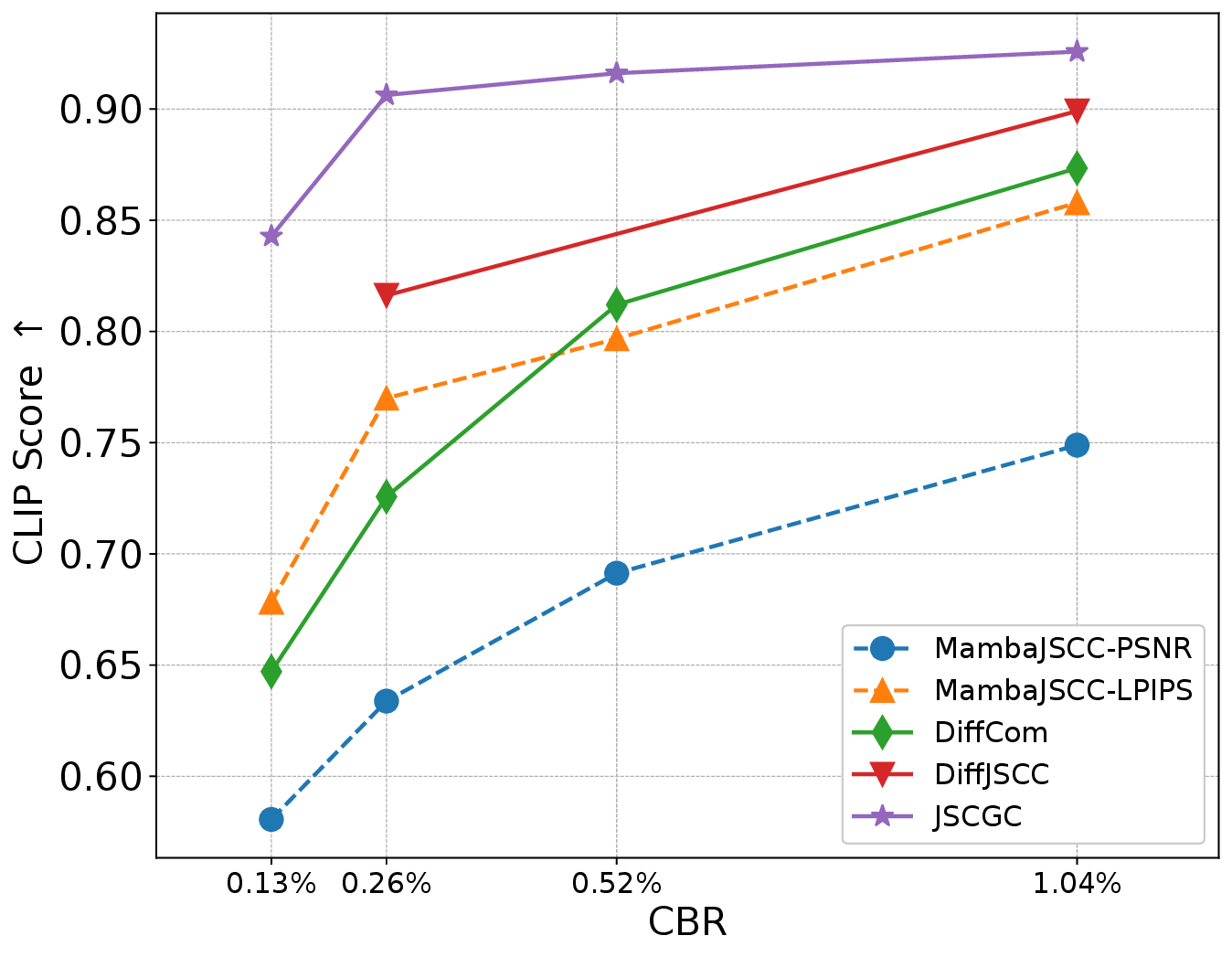}}
  \subfigure[]{\includegraphics[width=0.32\textwidth]{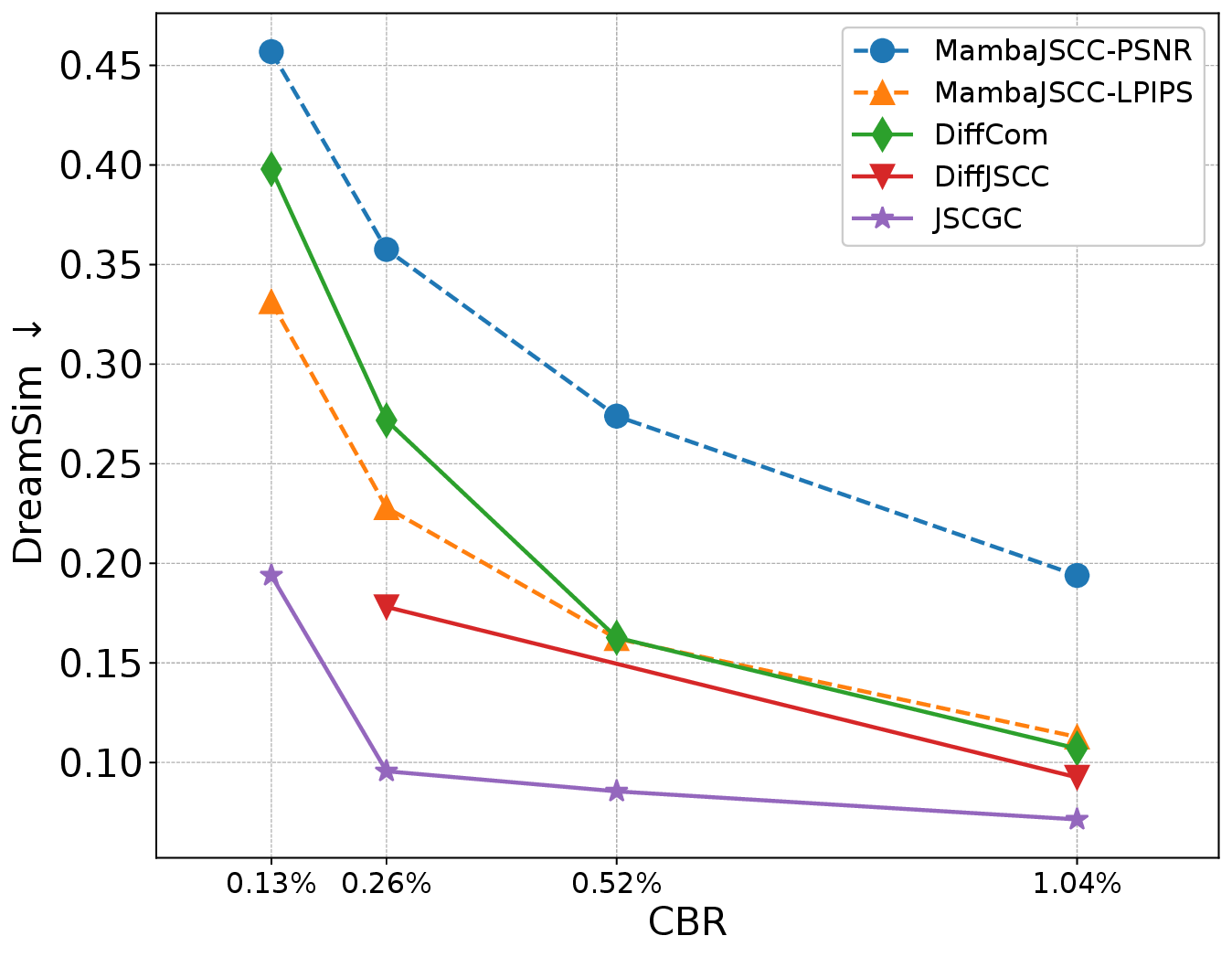}}
  \subfigure[]{\includegraphics[width=0.32\textwidth]{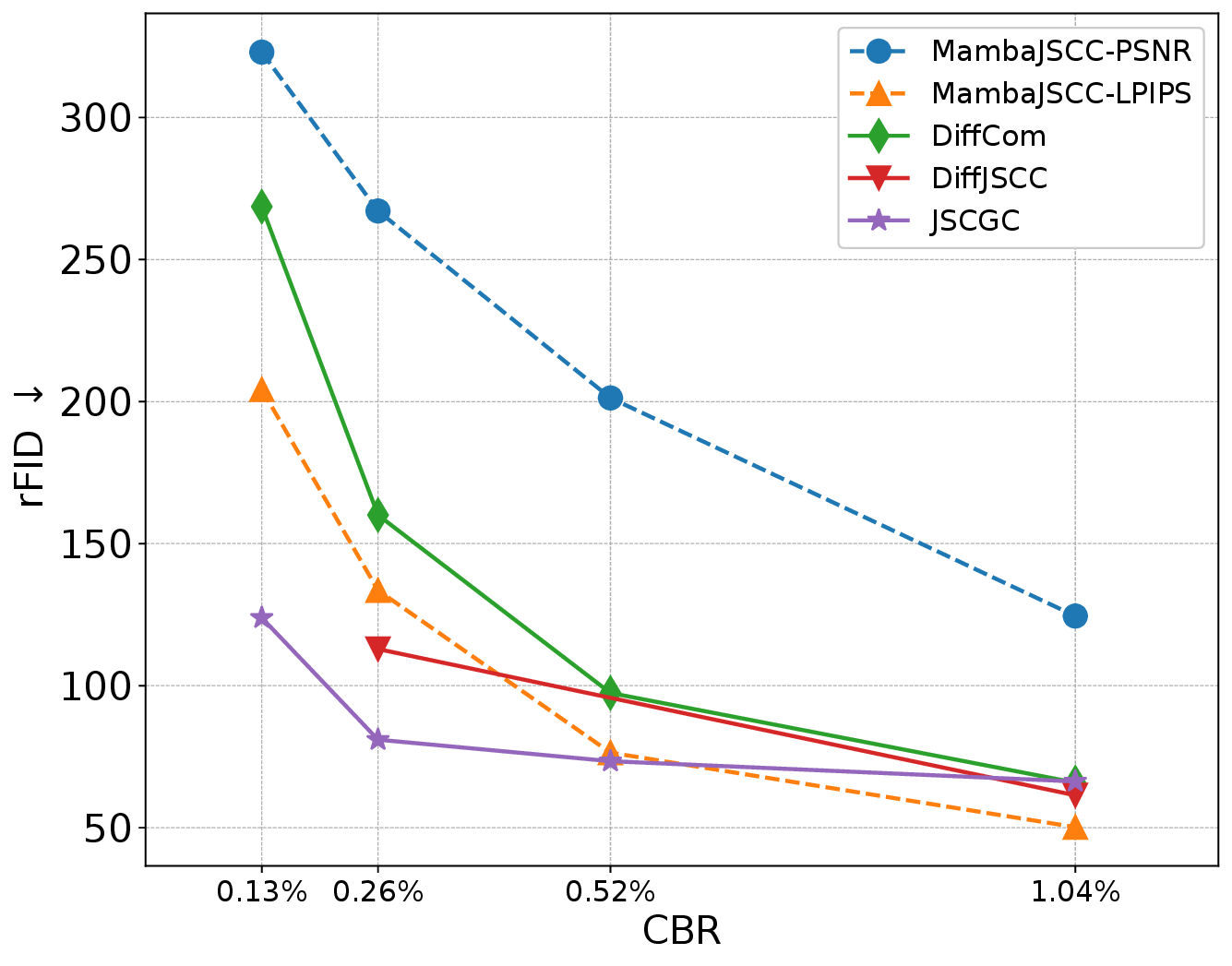}}
  \caption{The performance of different schemes versus CBR under the AWGN channel with SNR = 5 dB.  (a) CLIP Score. (b) DreamSim. (c) rFID.} 
  \label{awgn_results_cbr}
  \vspace{-0.3cm}
\end{figure*}

Since different methods employ different numbers of sampling steps, and DiffCom additionally performs gradient-based optimization during generation, FLOPs alone cannot fully reflect practical deployment cost. Therefore, Table \ref{tab:comparison} further reports the actual inference latency and corresponding performance at SNRs of 0 dB and 10 dB with a CBR of $2/768$. DreamSim and rFID are selected as representative evaluation metrics. Here, JSCGC-$N$ denotes the proposed method with $N$ sampling steps. The results show that DiffCom incurs the highest latency, requiring more than 18 times the inference time of JSCGC-50 while still achieving substantially worse performance. DiffJSCC reduces latency to 9.08 seconds, approximately half that of JSCGC-50. However, JSCGC can reduce the number of sampling steps to 20, lowering the latency to 8.62 seconds while still outperforming DiffJSCC under both SNR conditions. We further examine the performance-latency trade-off of JSCGC. Increasing the number of sampling steps from 10 to 50 consistently improves performance at the cost of higher latency. In contrast, increasing the number of steps from 50 to 100 yields only marginal gains. Therefore, 50 sampling steps are adopted as the default setting throughout this paper, providing a favorable balance between performance and efficiency.

\section{Conclusion}\label{sec:conclusion}
In this paper, we proposed JSCGC, a joint source-channel generative communication framework that shifts wireless communication from deterministic reconstruction to conditional generation. By maximizing mutual information under perceptual constraints, JSCGC enables generation-oriented encoding and transmission without relying on explicit distortion metrics or reconstruction decoders. We developed corresponding training and sampling algorithms and analyzed their effectiveness theoretically. Experimental results demonstrated that JSCGC consistently outperforms representative reconstruction-based and generative communication schemes in feature-based distortion, semantic consistency, and distributional fidelity, while exhibiting a distinct error behavior characterized by semantic inconsistency rather than perceptual distortion.
\appendices
\section{Proof of Proposition~\ref{prop:variational_info}}\label{app:training_loss}
Reviewing the communication process in (\ref{CMC}), we aim to generate $\hat{\mathbf{x}}$ under the control of $\hat{\mathbf{y}}$, where $\hat{\mathbf{y}}$ is the received signal and serves as the input of the generator, eliminating the need for parameterized generation. Therefore, we define the joint distributions with the true distribution regarding $\hat{\mathbf{y}}$ that
  $p(\mathbf{x}, \hat{\mathbf{y}}) = p(\hat{\mathbf{y}}) p(\mathbf{x}|\hat{\mathbf{y}}),$ and 
  $p_\theta(\mathbf{x}, \hat{\mathbf{y}}) = p(\hat{\mathbf{y}}) q_\theta(\mathbf{x}|\hat{\mathbf{y}})$.
The KL divergence can be expressed as 
  \begin{align}
  &D_{KL}(p(\mathbf{x},\hat{\mathbf{y}})||q_\theta(\mathbf{x},\hat{\mathbf{y}})) = D_{KL}(p(\hat{\mathbf{y}})||p(\hat{\mathbf{y}})) \nonumber\\ &+ \int p(\hat{\mathbf{y}}) \int p(\mathbf{x}|\hat{\mathbf{y}}) \log \frac{p(\mathbf{x}|\hat{\mathbf{y}})}{q_\theta(\mathbf{x}|\hat{\mathbf{y}})} d\mathbf{x} d\hat{\mathbf{y}},\\
  &D_{KL}(p(\mathbf{x},\hat{\mathbf{y}})||q_\theta(\mathbf{x},\hat{\mathbf{y}})) \geq D_{KL}(p(\mathbf{x})||q_\theta(\mathbf{x})) 
  \end{align}
Therefore, in our framework, we have
  \begin{align}\label{eq:uncondKLwithcondKL}
    D_{KL}(p(\mathbf{x})||q_\theta(\mathbf{x})) &\leq \int p(\hat{\mathbf{y}}) \int p(\mathbf{x}|\hat{\mathbf{y}}) \log \frac{p(\mathbf{x}|\hat{\mathbf{y}})}{q_\theta(\mathbf{x}|\hat{\mathbf{y}})} d\mathbf{x} d\hat{\mathbf{y}}\nonumber\\
    &= \iint p(\mathbf{x}, \hat{\mathbf{y}}) \log \frac{p(\mathbf{x}|\hat{\mathbf{y}})}{q_\theta(\mathbf{x}|\hat{\mathbf{y}})} d\mathbf{x} d\hat{\mathbf{y}}.
  \end{align}
  The optimization objective can be derived as 
  \begin{align}
    &H(\mathbf{X} | \hat{\mathbf{Y}})+D_{KL}(p(\mathbf{x})||q_\theta(\mathbf{x}))\nonumber\\
    &\leq \iint (-p(\mathbf{x},\hat{\mathbf{y}})[ \log \frac{p(\mathbf{x}|\hat{\mathbf{y}}))}{p(\mathbf{x}|\hat{\mathbf{y}}))} + \log q_\theta(\mathbf{x}|\hat{\mathbf{y}})] d\mathbf{x} d\hat{\mathbf{y}}\nonumber\\
    &=\mathbb{E}_{p(\mathbf{x},\hat{\mathbf{y}})}\left[-\log q_\theta(\mathbf{x}|\hat{\mathbf{y}})\right] 
  \end{align} 
Furthermore, we construct a sequence of random variables $\{\mathbf{x}_t\}^T_{t=0}$ as 
$\mathbf{x}_t=(1-\frac{t}{T})\mathbf{x}_0+\frac{t}{T} \mathbf{\epsilon}$, where $\mathbf{x}_0=\mathbf{x}$.

The evidence upper bound with respect to $\{\mathbf{x}_t\}^T_{t=0}$ is:
\begin{equation}
\label{ELBO}
\begin{aligned}[c]
&\mathbb{E}_{p(\mathbf{x}_0,\hat{\mathbf{y}})}\left[-\log q_\theta(\mathbf{x}_0|\hat{\mathbf{y}})\right] \leq \mathbb{E}_{p(\mathbf{x}_{0:T},\hat{\mathbf{y}})}\left[-\log \frac{q_\theta(\mathbf{x}_{0:T}|\hat{\mathbf{y}})}{p(\mathbf{x}_{1:T}|\mathbf{x}_0, \hat{\mathbf{y}})}\right]\\
&=\mathbb{E}_{p(\mathbf{x}_{0:1},\hat{\mathbf{y}})} (
-\underbrace{\log q_\theta(\mathbf{x}_0 | \mathbf{x}_1, \hat{\mathbf{y}})}_{\text{Reconstruction Term}} )\\
&+ \mathbb{E}_{p(\mathbf{x}_{0})} \Big(\underbrace{D_{\text{KL}}(p(\mathbf{x}_T|\mathbf{x}_0) \,\|\, p(\mathbf{x}_T))}_{\text{Prior Matching Term}}\Big)\\
& + \sum_{t=2}^{T} \underbrace{\mathbb{E}_{p(\mathbf{x}_{0:t},\hat{\mathbf{y}})} \Big(D_{\text{KL}}(p(\mathbf{x}_{t-1}|\mathbf{x}_t, \mathbf{x}_0, \hat{\mathbf{y}}) \,\|\, q_\theta(\mathbf{x}_{t-1}|\mathbf{x}_t, \hat{\mathbf{y}}))\Big)}_{\text{Posterior Term}}.
\end{aligned}
\end{equation}
Here, we only need to optimize the posterior term. In this term, the true posterior is tractable as  
\begin{align}
p(\mathbf{x}_{t-1}|\mathbf{x}_t, \mathbf{x}_0, \hat{\mathbf{y}})= \mathcal{N}\left(\mathbf{x}_{t-1};  \boldsymbol{\mu}(\mathbf{x}_t, t, \mathbf{x}_0), \sigma^2 \mathbf{I}\right).
\end{align}
where  $\sigma^2$ is $\frac{(t-1)^2 (2Tt - 2t^2 + 2t - T)}{T t^2 (T-t+1)^2},$ and $\boldsymbol{\mu}(\mathbf{x}_t, t, \mathbf{x}_0)$ is
\[
\frac{(t-1)^2 (T-t)}{t^2 (T-t+1)}\mathbf{x}_t + \left(\frac{T-t+1}{T} - \frac{(t-1)^2 (T-t)^2}{t^2 (T-t+1) T}\right)\mathbf{x}_0.
\]

Therefore, we also formulate the parameterized distribution as a Gaussian distribution with mean $\boldsymbol{\mu}_\theta(\mathbf{x}_t, t, \hat{\mathbf{y}})$ and variance $\sigma^2 \mathbf{I}$. Therefore, we can compute the posterior term as:
\begin{equation}\label{KL_divergence}
\begin{aligned}
&D_{\text{KL}}\left( \mathcal{N}(\boldsymbol{\mu}(\mathbf{x}_t, \mathbf{x}_0, t), \sigma^2 \mathbf{I}) \,\|\, \mathcal{N}(\boldsymbol{\mu}_\theta(\mathbf{x}_t, t, \hat{\mathbf{y}}), \sigma^2 \mathbf{I}) \right) \\
&= \frac{1}{2\sigma^2} \| \boldsymbol{\mu}(\mathbf{x}_t, \mathbf{x}_0, t) - \boldsymbol{\mu}_\theta(\mathbf{x}_t, t, \hat{\mathbf{y}}) \|^2. \hspace{-0.5cm}
\end{aligned}
\end{equation}

\addtolength{\topmargin}{0.02in}
To facilitate the training efficiency, we reparameterize the posterior mean to predict the velocity field $\boldsymbol{\nu} = \frac{T}{t}(\mathbf{x}_t-\mathbf{x}_0)=\mathbf{\epsilon} - \mathbf{x}_0$. Therefore, $\boldsymbol{\mu}(\mathbf{x}_t, t, \mathbf{x}_0)$ is
$\frac{1}{t}\mathbf{x}_t + \left[ - \frac{2t-1}{t(T-t+1)} \right](\mathbf{\epsilon} - \mathbf{x}_0)$.
We further reparameterize $\mathbf{\mu}_\theta(\mathbf{x}_t, t, \hat{\mathbf{y}}) $
and substitute these into (\ref{KL_divergence}), resulting in the final tractable optimization objective as:
\begin{equation}\label{training_loss}
\mathcal{L}(\theta,\phi)= \mathbb{E}_{t, \mathbf{x}_0, \hat{\mathbf{y}}, \mathbf{\epsilon}} \left[ \| (\mathbf{\epsilon} - \mathbf{x}_0) - \boldsymbol{\nu}_\theta(\mathbf{x}_t, t, \hat{\mathbf{y}}) \|^2 \right].
\end{equation}

\section{Proof of Proposition~\ref{prop:optimal_solution}}\label{app:optimal_solution}
Expanding the loss function into its integral form yields:
\[
    \iiiint\limits_{t, \mathbf{x}_0, \hat{\mathbf{y}}, \mathbf{\epsilon}} \left\| (\mathbf{\epsilon} - \mathbf{x}_0) - \boldsymbol{\nu}_\theta(\mathbf{x}_t, t, \hat{\mathbf{y}}) \right\|^2
    p(t, \mathbf{x}_0, \hat{\mathbf{y}}, \mathbf{\epsilon}) \, d\mathbf{\epsilon} \, d\hat{\mathbf{y}} \, d\mathbf{x}_0 \, dt
\]
Here, $(\mathbf{x}_t, t, \hat{\mathbf{y}})$ is input of $\boldsymbol{\nu}_\theta(\cdot)$ and $\mathbf{x}_t$ is a deterministic function of $(\mathbf{x}_0, \mathbf{\epsilon}, t)$. Therefore, $p(t, \mathbf{x}_0, \hat{\mathbf{y}}, \mathbf{\epsilon}) \, d\mathbf{\epsilon} \, d\hat{\mathbf{y}} \, d\mathbf{x}_0 \, dt$  is
\begin{align}
p(\mathbf{x}_0, \mathbf{\epsilon} \mid \mathbf{x}_t, t, \hat{\mathbf{y}}) \, d\mathbf{\epsilon} \, d\mathbf{x}_0 \cdot \ p(\mathbf{x}_t, t, \hat{\mathbf{y}}) \, d\hat{\mathbf{y}} \, dt \, d\mathbf{x}_t
\end{align}
With the decomposition, the integral is rewritten as
\[
  \iiint\limits_{\mathbf{x}_t, t, \hat{\mathbf{y}}} \mathbb{E} \left[\left\| (\mathbf{\epsilon} - \mathbf{x}_0) - \boldsymbol{\nu}_\theta(\cdot) \right\|^2 \Big| \mathbf{x}_t, t, \hat{\mathbf{y}} \right]  p(\mathbf{x}_t, t, \hat{\mathbf{y}}) \, d\hat{\mathbf{y}} \, dt \, d\mathbf{x}_t
\]
 Noticing that the optimization variable $\theta,\phi$ only determine $\boldsymbol{\nu}_\theta(\cdot)$, optimizing the parameters is equivalent to optimizing $\boldsymbol{\nu}_\theta(\cdot)$.  Given the non-negativity of the integrand, minimizing the overall integral is equivalent to minimizing the integrand almost everywhere with respect to $(\mathbf{x}_t, t, \hat{\mathbf{y}})$. Therefore, let 
 \begin{equation}
  J(\boldsymbol{\nu}_\theta(\cdot))=\mathbb{E} \left[\left\| (\mathbf{\epsilon} - \mathbf{x}_0) - \boldsymbol{\nu}_\theta(\mathbf{x}_t, t, \hat{\mathbf{y}}) \right\|^2 \Big| \mathbf{x}_t, t, \hat{\mathbf{y}} \right].
 \end{equation}
 Expanding $J(\boldsymbol{\nu}_\theta(\cdot))$ with $\boldsymbol{\nu}= \mathbf{\epsilon}-\mathbf{x}_0$, we have:
\begin{align}
&J(\cdot)=\int \left( \|\boldsymbol{\nu}\|^2 - 2 \boldsymbol{\nu}_\theta(\cdot)^\top \boldsymbol{\nu} + \|\boldsymbol{\nu}_\theta(\cdot)\|^2 \right) p(\boldsymbol{\nu} \mid \mathbf{x}_t, t, \hat{\mathbf{y}}) \, d\boldsymbol{\nu} \nonumber\\
&= \underbrace{\int \|\boldsymbol{\nu}\|^2 \, p(\boldsymbol{\nu} \mid \mathbf{x}_t, t, \hat{\mathbf{y}}) \, d\boldsymbol{\nu}}_{\text{Independent of } \boldsymbol{\nu}_\theta(\cdot)} 
- 2 \boldsymbol{\nu}_\theta(\cdot)^\top \underbrace{\int \boldsymbol{\nu} \, p(\boldsymbol{\nu} \mid \mathbf{x}_t, t, \hat{\mathbf{y}}) \, d\boldsymbol{\nu}}_{\mathbb{E}[\boldsymbol{\nu} \mid \mathbf{x}_t, t, \hat{\mathbf{y}}]} \nonumber\\
&+ \|\boldsymbol{\nu}_\theta(\cdot)\|^2 \underbrace{\int p(\boldsymbol{\nu} \mid \mathbf{x}_t, t, \hat{\mathbf{y}}) \, d\boldsymbol{\nu}}_{=1}\nonumber\\
&= \|\mathbb{E}[\boldsymbol{\nu} \mid \mathbf{x}_t, t, \hat{\mathbf{y}}]-\boldsymbol{\nu}_\theta(\cdot)\|^2 + C, 
\end{align}
where $C=\int \|\boldsymbol{\nu}\|^2 \, p(\boldsymbol{\nu} \mid \mathbf{x}_t, t, \hat{\mathbf{y}}) \, d\boldsymbol{\nu}-\|\mathbb{E}[\boldsymbol{\nu} \mid \mathbf{x}_t, t, \hat{\mathbf{y}}]\|^2$ is constant independent of $\boldsymbol{\nu}_\theta(\cdot)$.

Obviously, if and only if $\boldsymbol{\nu}^*(\cdot)=\mathbb{E}[\boldsymbol{\nu} \mid \mathbf{x}_t, t, \hat{\mathbf{y}}]$, $J(\cdot)$ can be minimized. Therefore, the optimal solution is 
\begin{align}\label{eq:optimal_solution}
\boldsymbol{\nu}^*(\mathbf{x}_t, t, \hat{\mathbf{y}}) = \mathbb{E}[\mathbf{\epsilon} - \mathbf{x}_0|\mathbf{x}_t, t, \hat{\mathbf{y}}]
\end{align}
Furthermore, considering Tweedie's Formula 
\begin{align}
\nabla_{\mathbf{x}_t} \log p_t(\mathbf{x}_t|\hat{\mathbf{y}}) &= \mathbb{E}_{\mathbf{x}_0|\mathbf{x}_t,\hat{\mathbf{y}}} \left[ -\frac{T\mathbf{\epsilon}}{t}\Big|\mathbf{x}_t,\hat{\mathbf{y}} \right]\nonumber\\
&= - \frac{T-t}{t} \mathbb{E}[\mathbf{\epsilon} - \mathbf{x}_0 | \mathbf{x}_t, \hat{\mathbf{y}}] - \frac{T}{t}\mathbf{x}_t
\end{align}
 Finally, take (\ref{eq:optimal_solution}) into the above equation, we can derive 
\begin{equation}
  \boldsymbol{\nu}^*(\mathbf{x}_t, t, \hat{\mathbf{y}}) =  - \frac{t\nabla_{\mathbf{x}_t} \log p_t(\mathbf{x}_t|\hat{\mathbf{y}})+T\mathbf{x_t}}{T-t} 
\end{equation}

\section{Proof of Proposition \ref{SDEKLbound}}\label{app:SDEKLbound}
Let \(\Omega:=C([0,T];\mathbb R^d)\) be the path space equipped with its Borel \(\sigma\)-algebra \(\mathcal B(\Omega)\) induced by the supremum norm. The true reverse diffusion process in (\ref{eq:true_sde}) and the learned reverse diffusion process in (\ref{eq:learned_sde}) induce two probability measures on \((\Omega,\mathcal B(\Omega))\), denoted by $\mathbb{P}$ and $\mathbb{Q}$, respectively. Their path-space KL divergence is given by
$D_{\text{KL}}(\mathbb{P}\|\mathbb{Q}):=\mathbb E_{\mathbb{P}}\left[\log\frac{d\mathbb{P}}{d\mathbb{Q}}\right].$

Since they have the same marginal distributions, according to the chain rule of KL divergences, we have
\begin{align}
D_{\text{KL}}(\mathbb{P} \,\|\, \mathbb{Q}) = \mathbb{E}_{\mathbf{x}_T \sim \pi(\cdot)} \left[ D_{\text{KL}}\big(\mathbb{P}(\cdot \mid \mathbf{x}_T ) \,\|\, \mathbb{Q}(\cdot \mid \mathbf{x}_{T})\big) \right] \label{eq:chain_rule_kl}
\end{align}

Since the conditional path measures $\mathbb{P}(\cdot \mid \mathbf{x}_T)$ and $\mathbb{Q}(\cdot \mid \mathbf{x}_T)$ share identical initial conditions and the same diffusion coefficient $g(t)$, we can invoke the Girsanov theorem to explicitly compute their divergence.
By utilizing the martingale property of the It\^o integral under the true measure $\mathbb{P}$ ,which evaluates its expectation to zero, the conditional KL divergence simplifies directly to the integrated squared difference of the drifts:
\begin{align}
&D_{\text{KL}}\big(\mathbb{P}(\cdot \mid \mathbf{x}_T) \,\|\, \mathbb{Q}(\cdot \mid \mathbf{x}_T)\big) \nonumber\\
&= \frac{2}{T^2} \mathbb{E}_{\mathbb{P}(\cdot \mid \mathbf{x}_T)} \left[ \int_0^T \frac{1}{g(t)^2} \left\|\boldsymbol{\nu}^*(\mathbf{x}_t, t, \hat{\mathbf{y}}) - \boldsymbol{\nu}_\theta(\mathbf{x}_t, t, \hat{\mathbf{y}}) \right\|^2 dt \right] \nonumber \\
&= \int_0^T \frac{2}{g(t)^2 T^2} \mathbb{E}_{\mathbb{P}(\cdot \mid \mathbf{x}_T)} \left[ \big\| \boldsymbol{\nu}^*(\mathbf{x}_t, t, \hat{\mathbf{y}}) - \boldsymbol{\nu}_\theta(\mathbf{x}_t, t, \hat{\mathbf{y}}) \big\|^2 \right] dt \label{eq:conditional_girsanov}
\end{align}
According to the law of total expectation, nesting the outer expectation over the initial state $\mathbf{x}_T \sim p_T$, we have: 
\begin{align}
&D_{\text{KL}}(\mathbb{P} \,\|\, \mathbb{Q}) = \mathbb{E}_{\mathbf{x}_T \sim \pi(\cdot)} \left[ D_{\text{KL}}\big(\mathbb{P}(\cdot \mid \mathbf{x}_T ) \,\|\, \mathbb{Q}(\cdot \mid \mathbf{x}_{T})\big) \right]= \nonumber\\
&\int_0^T \frac{2}{g(t)^2 T^2} \mathbb{E}_{p_t(\mathbf{x}_t | \hat{\mathbf{y}})} \left[ \big\| \boldsymbol{\nu}^*(\mathbf{x}_t, t, \hat{\mathbf{y}}) - \boldsymbol{\nu}_\theta(\mathbf{x}_t, t, \hat{\mathbf{y}}) \big\|^2 \right] dt.
\end{align}
According to the Data Processing Inequality (DPI), we have:
\begin{equation}\label{eq:condupperbound}
D_{\text{KL}}(p(\mathbf{x}|\hat{\mathbf{y}}) \,\|\, q_\theta^s(\mathbf{x}|\hat{\mathbf{y}})) \le D_{\text{KL}}(\mathbb{P} \,\|\, \mathbb{Q})
\end{equation}
Based on the derivations in (\ref{eq:uncondKLwithcondKL}), we have
\begin{align}
  &D_{KL}(p(\mathbf{x})||q_\theta^s(\mathbf{x})) \leq \mathbb{E}_{p(\hat{\mathbf{y}})} \left[D_{\text{KL}}(p(\mathbf{x}|\hat{\mathbf{y}}) \,\|\, q_\theta^s(\mathbf{x}|\hat{\mathbf{y}})) \right]\nonumber\\
  &\leq\mathbb{E}_{p(\hat{\mathbf{y}})} \left[\int_0^T \frac{2}{g(t)^2 T^2} \mathbb{E}_{p_t(\mathbf{x}_t | \hat{\mathbf{y}})} \left[ \big\| \boldsymbol{\nu}^*(\cdot) - \boldsymbol{\nu}_\theta(\cdot) \big\|^2 \right] dt\right]\nonumber\\
  &= \int_0^T \frac{2}{g(t)^2 T^2} \mathbb{E}_{p_t(\mathbf{x}_t, \hat{\mathbf{y}})} \left[ \big\| \boldsymbol{\nu}^*(\mathbf{x}_t, t, \hat{\mathbf{y}}) - \boldsymbol{\nu}_\theta(\mathbf{x}_t, t, \hat{\mathbf{y}}) \big\|^2 \right] dt
\end{align}

\section{Proof of Proposition \ref{ODEKLbound}}\label{app:ODEKLbound}
For any fixed conditioning variable $\hat{\mathbf y}$, since the true ODE is
$\frac{d\mathbf{x}_t}{dt}=\frac{1}{T}\boldsymbol{\nu}^*(\mathbf{x}_t,t,\hat{\mathbf y})$,
the corresponding conditional marginal density $p_t(\mathbf{x}_t\mid \hat{\mathbf y})$ satisfies the continuity equation
\begin{equation}
\partial_t p_t(\mathbf{x}_t\mid \hat{\mathbf y})
+
\nabla_{\mathbf{x}_t}\cdot
\left(
p_t(\mathbf{x}_t\mid \hat{\mathbf y})
\frac{1}{T}\boldsymbol{\nu}^*(\mathbf{x}_t,t,\hat{\mathbf y})
\right)
=0.
\label{eq:true_continuity_residual}
\end{equation}
Likewise, since the learned ODE is $\frac{d\mathbf{x}_t}{dt}=\frac{1}{T}\boldsymbol{\nu}_\theta(\mathbf{x}_t,t,\hat{\mathbf y})$, the induced conditional marginal density $q_{t,\theta}^{o}(\mathbf{x}_t\mid \hat{\mathbf y})$ satisfies
\begin{equation}
\partial_t q_{t,\theta}^{o}(\mathbf{x}_t\mid \hat{\mathbf y})
+
\nabla_{\mathbf{x}_t}\cdot
\left(
q_{t,\theta}^{o}(\mathbf{x}_t\mid \hat{\mathbf y})
\frac{1}{T}\boldsymbol{\nu}_\theta(\mathbf{x}_t,t,\hat{\mathbf y})
\right)
=0.
\label{eq:learned_continuity_residual}
\end{equation}
Consider the gradient $\frac{d}{dt}D_{KL}\!\left(p_t(\mathbf{x}_t\mid \hat{\mathbf y})\|q_{t,\theta}^{o}(\mathbf{x}_t\mid \hat{\mathbf y})\right)$ is 
\begin{align}
&\int \partial_t p_t(\cdot)\log \frac{p_t(\cdot)}{q_{t,\theta}^{o}(\cdot)} +p_t(\cdot)\partial_t\left(\log \frac{p_t(\cdot)}{q_{t,\theta}^{o}(\cdot)}\right) d\mathbf{x}_t.
\label{eq:kl_derivative_step1_residual}
\end{align}
Since
\begin{align}
\partial_t\left(\log\frac{p_t(\mathbf{x}_t\mid \hat{\mathbf y})}{q_{t,\theta}^{o}(\mathbf{x}_t\mid \hat{\mathbf y})}\right)=\frac{\partial_t p_t(\mathbf{x}_t\mid \hat{\mathbf y})}{p_t(\mathbf{x}_t\mid \hat{\mathbf y})}-\frac{\partial_t q_{t,\theta}^{o}(\mathbf{x}_t\mid \hat{\mathbf y})}{q_{t,\theta}^{o}(\mathbf{x}_t\mid \hat{\mathbf y})},
\end{align}
the second term can be expanded as
\[\int p_t(\cdot)\partial_t\left(\log\frac{p_t(\cdot)}{q_{t,\theta}^{o}(\cdot)}\right)d\mathbf{x}_t=\int \partial_t p_t(\cdot) - p_t(\cdot)\frac{\partial_t q_{t,\theta}^{o}(\cdot)}{q_{t,\theta}^{o}(\cdot)}\,d\mathbf{x}_t.\]
As $p_t(\cdot)$ is a probability density, we have 
\begin{align}
\int_{\mathbb R^d}\partial_t p_t(\mathbf{x}_t\mid \hat{\mathbf y})\,d\mathbf{x}_t=\frac{d}{dt}\int_{\mathbb R^d}p_t(\mathbf{x}_t\mid \hat{\mathbf y})\,d\mathbf{x}_t=0.
\end{align}
Hence, the gradient $\frac{d}{dt}D_{KL}\!\left(p_t(\mathbf{x}_t\mid \hat{\mathbf y})\|q_{t,\theta}^{o}(\mathbf{x}_t\mid \hat{\mathbf y})\right)$ is 
\begin{align}
\int\partial_t p_t(\cdot)\log\frac{p_t(\cdot)}{q_{t,\theta}^{o}(\cdot)}\,d\mathbf{x}_t-\int p_t(\cdot)\frac{\partial_t q_{t,\theta}^{o}(\cdot)}{q_{t,\theta}^{o}(\cdot)}\,d\mathbf{x}_t.
\label{eq:kl_derivative_step3_residual}
\end{align}
Substituting \eqref{eq:true_continuity_residual} and \eqref{eq:learned_continuity_residual} into \eqref{eq:kl_derivative_step3_residual} yields
\begin{align}
&\frac{d}{dt}D_{KL}\!\left(p_t(\mathbf{x}_t\mid \hat{\mathbf y})\middle\|q_{t,\theta}^{o}(\mathbf{x}_t\mid \hat{\mathbf y})\right)\nonumber\\
&=-\int \nabla_{\mathbf{x}_t}\cdot\left(p_t(\mathbf{x}_t\mid \hat{\mathbf y})\frac{1}{T}\boldsymbol{\nu}^*(\mathbf{x}_t,t,\hat{\mathbf y})\right)\log\frac{p_t(\mathbf{x}_t\mid \hat{\mathbf y})}{q_{t,\theta}^{o}(\mathbf{x}_t\mid \hat{\mathbf y})}\,d\mathbf{x}_t\nonumber\\
&+\int p_t(\mathbf{x}_t\mid \hat{\mathbf y})\frac{\nabla_{\mathbf{x}_t}\cdot\left(q_{t,\theta}^{o}(\mathbf{x}_t\mid \hat{\mathbf y})\frac{1}{T}\boldsymbol{\nu}_\theta(\mathbf{x}_t,t,\hat{\mathbf y})\right)}{q_{t,\theta}^{o}(\mathbf{x}_t\mid \hat{\mathbf y})}\,d\mathbf{x}_t.
\label{eq:kl_derivative_step4_residual}
\end{align}
Through integration by parts, the first term in \eqref{eq:kl_derivative_step4_residual} is
\begin{align}
\int_{\mathbb R^d}p_t(\mathbf{x}_t\mid \hat{\mathbf y})\left(\frac{1}{T}\boldsymbol{\nu}^*(\mathbf{x}_t,t,\hat{\mathbf y})\right)^\top\nabla_{\mathbf{x}_t}\log\frac{p_t(\mathbf{x}_t\mid \hat{\mathbf y})}{q_{t,\theta}^{o}(\mathbf{x}_t\mid \hat{\mathbf y})}\,d\mathbf{x}_t.
\label{eq:first_term_parts_residual}
\end{align}
For the second term, we have
\begin{align}
&\frac{\nabla_{\mathbf{x}_t}\cdot\left(q_{t,\theta}^{o}(\mathbf{x}_t\mid \hat{\mathbf y})\frac{1}{T}\boldsymbol{\nu}_\theta(\mathbf{x}_t,t,\hat{\mathbf y})\right)}{q_{t,\theta}^{o}(\mathbf{x}_t\mid \hat{\mathbf y})}=\nabla_{\mathbf{x}_t}\cdot\left(\frac{1}{T}\boldsymbol{\nu}_\theta(\mathbf{x}_t,t,\hat{\mathbf y})\right)\nonumber\\
&+\left(
\frac{1}{T}\boldsymbol{\nu}_\theta(\mathbf{x}_t,t,\hat{\mathbf y})\right)^\top\nabla_{\mathbf{x}_t}\log q_{t,\theta}^{o}(\mathbf{x}_t\mid \hat{\mathbf y}).
\end{align}
Therefore, the second term in \eqref{eq:kl_derivative_step4_residual} can be expressed as
\begin{align}
&\int_{\mathbb R^d}p_t(\mathbf{x}_t\mid \hat{\mathbf y})\frac{\nabla_{\mathbf{x}_t}\cdot\left(q_{t,\theta}^{o}(\mathbf{x}_t\mid \hat{\mathbf y})\frac{1}{T}\boldsymbol{\nu}_\theta(\mathbf{x}_t,t,\hat{\mathbf y})\right)}{q_{t,\theta}^{o}(\mathbf{x}_t\mid \hat{\mathbf y})}\,d\mathbf{x}_t\nonumber\\
&=\int p_t(\mathbf{x}_t\mid \hat{\mathbf y})\nabla_{\mathbf{x}_t}\cdot\left(\frac{1}{T}\boldsymbol{\nu}_\theta(\mathbf{x}_t,t,\hat{\mathbf y})\right)\,d\mathbf{x}_t\nonumber\\
&+\int p_t(\mathbf{x}_t\mid \hat{\mathbf y})\left(\frac{1}{T}\boldsymbol{\nu}_\theta(\mathbf{x}_t,t,\hat{\mathbf y})\right)^\top\nabla_{\mathbf{x}_t}\log q_{t,\theta}^{o}(\mathbf{x}_t\mid \hat{\mathbf y})\,d\mathbf{x}_t.\nonumber\\
&=-\int p_t(\mathbf{x}_t\mid \hat{\mathbf y})\left(\frac{1}{T}\boldsymbol{\nu}_\theta(\mathbf{x}_t,t,\hat{\mathbf y})\right)^\top\nabla_{\mathbf{x}_t}\log\frac{p_t(\mathbf{x}_t\mid \hat{\mathbf y})}{q_{t,\theta}^{o}(\mathbf{x}_t\mid \hat{\mathbf y})}\,d\mathbf{x}_t.
\label{eq:second_term_split_residual}
\end{align}
Combining \eqref{eq:first_term_parts_residual} and \eqref{eq:second_term_split_residual} with \eqref{eq:kl_derivative_step4_residual} gives
\begin{align}
&\frac{d}{dt}D_{KL}\!\left(p_t(\mathbf{x}_t\mid \hat{\mathbf y})\middle\|q_{t,\theta}^{o}(\mathbf{x}_t\mid \hat{\mathbf y})\right)=\frac{1}{T}\mathbb E_{p_t(\mathbf{x}_t\mid \hat{\mathbf y})}\cdot\nonumber\\
&\left[\left(\boldsymbol{\nu}^*(\mathbf{x}_t,t,\hat{\mathbf y})-\boldsymbol{\nu}_\theta(\mathbf{x}_t,t,\hat{\mathbf y})\right)^\top\nabla_{\mathbf{x}_t}\log\frac{p_t(\mathbf{x}_t\mid \hat{\mathbf y})}{q_{t,\theta}^{o}(\mathbf{x}_t\mid \hat{\mathbf y})}\right].
\label{eq:kl_derivative_exact_residual}
\end{align}
With $\nabla_{\mathbf{x}_t}\log p_t(\mathbf{x}_t\mid \hat{\mathbf y})=-\frac{T-t}{t}\boldsymbol{\nu}^*(\mathbf{x}_t,t,\hat{\mathbf y})-\frac{T}{t}\mathbf{x}_t$,
we have
\[\nabla_{\mathbf{x}_t}\log\frac{p_t(\cdot)}{q_{t,\theta}^{o}(\cdot)}=-\frac{T-t}{t}\boldsymbol{\nu}^*(\cdot)-\frac{T}{t}\mathbf{x}_t-\nabla_{\mathbf{x}_t}\log q_{t,\theta}^{o}(\cdot).\]
Now, $\frac{d}{dt}D_{KL}\!\left(p_t(\mathbf{x}_t\mid \hat{\mathbf y})\|q_{t,\theta}^{o}(\mathbf{x}_t\mid \hat{\mathbf y})\right)$ can be expressed as 

\begin{align}
&\frac{1}{T}\mathbb E_{p_t(\cdot)}[\left(\boldsymbol{\nu}^*(\cdot)-\boldsymbol{\nu}_\theta(\cdot)\right)^\top (-\frac{T-t}{t}\left(\boldsymbol{\nu}^*(\cdot)-\boldsymbol{\nu}_\theta(\cdot)\right)-\mathbf r_\theta(\cdot))].
\end{align}
Expanding the inner product yields and integrating both sides with respect to $t$ from $0$ to $T$ gives
\begin{align}
&D_{KL}\left(p_0(\mathbf{x}_0\mid \hat{\mathbf y})\middle\|q_{0,\theta}^{o}(\mathbf{x}_0\mid \hat{\mathbf y})\right)\nonumber\\
&=D_{KL}\left(p_T(\mathbf{x}_T\mid \hat{\mathbf y})\middle\|q_{T,\theta}^{o}(\mathbf{x}_T\mid \hat{\mathbf y})\right)\nonumber\\
&+\int_{0}^{T}\frac{T-t}{Tt}\mathbb{E}_{p_t(\mathbf{x}_t, \hat{\mathbf y})}\left[\left\|\boldsymbol{\nu}^*(\mathbf{x}_t,t,\hat{\mathbf y})-\boldsymbol{\nu}_\theta(\mathbf{x}_t,t,\hat{\mathbf y})\right\|^2\right]\nonumber\\
&+\frac{1}{T}\mathbb{E}_{p_t(\mathbf{x}_t, \hat{\mathbf y})}\left[\left(\boldsymbol{\nu}^*(\mathbf{x}_t,t,\hat{\mathbf y})-\boldsymbol{\nu}_\theta(\mathbf{x}_t,t,\hat{\mathbf y})\right)^\top\mathbf r_\theta(\mathbf{x}_t,t,\hat{\mathbf y})\right]dt .
\end{align}
Since $\mathbf{x}_T$ is sampled from the same distribution $\pi(\mathbf{x}_T)$ in both ODEs, the term $D_{KL}\left(p_T(\mathbf{x}_T\mid \hat{\mathbf y})\middle\|q_{T,\theta}^{o}(\mathbf{x}_T\mid \hat{\mathbf y})\right)$ is 0.

Finally, because the condition $\mathbf{\hat{y}}$ is the received signal in our scenario, similar to the derivation in \eqref{eq:uncondKLwithcondKL}, we have
\begin{align}
  D_{KL}\left(p(\mathbf{x})\middle\|q_{\theta}^{o}(\mathbf{x})\right)\leq\mathbb{E}_{p(\hat{\mathbf{y}})}D_{KL}\left(p_0(\mathbf{x}_0\mid \hat{\mathbf y})\middle\|q_{0,\theta}^{o}(\mathbf{x}_0\mid \hat{\mathbf y})\right).\nonumber
\end{align}
Therefore, 
\begin{align}
&D_{KL}\left(p(\mathbf{x})\middle\|q_{\theta}^{o}(\mathbf{x})\right)\nonumber\\
&\leq\int_{0}^{T}\frac{T-t}{Tt}\mathbb{E}_{p_t(\mathbf{x}_t, \hat{\mathbf y})}\left[\left\|\boldsymbol{\nu}^*(\mathbf{x}_t,t,\hat{\mathbf y})-\boldsymbol{\nu}_\theta(\mathbf{x}_t,t,\hat{\mathbf y})\right\|^2\right]\nonumber\\
&+\frac{1}{T}\mathbb{E}_{p_t(\mathbf{x}_t, \hat{\mathbf y})}\left[\left(\boldsymbol{\nu}^*(\mathbf{x}_t,t,\hat{\mathbf y})-\boldsymbol{\nu}_\theta(\mathbf{x}_t,t,\hat{\mathbf y})\right)^\top\mathbf r_\theta(\mathbf{x}_t,t,\hat{\mathbf y})\right]\,dt .\nonumber
\end{align}

\section{Proof of Proposition \ref{ODEW2bound}}\label{app:ODEW2bound}
We construct a coupling by fixing the same initial point, thereby obtaining an upper bound on the $W_2$ distance.
For any fixed $\hat{\mathbf{y}}$, let the two ODEs share the same initial point $\mathbf{x}_T$. Therefore, the two ODEs are
\begin{align}
&\frac{d\mathbf{x}_t}{dt}=\frac{1}{T}\boldsymbol{\nu}^*(\mathbf{x}_t,t,\hat{\mathbf{y}}),\qquad\mathbf{x}_T\sim \pi(\mathbf{x}_T),\\
&\frac{d\mathbf{z}_t}{dt}=\frac{1}{T}\boldsymbol{\nu}_{\theta}(\mathbf{z}_t,t,\hat{\mathbf{y}}),\qquad\mathbf{z}_T=\mathbf{x}_T.
\end{align}
By construction, the joint law of $(\mathbf{X}_t,\mathbf{Z}_t)$ is a coupling of $p_t(\mathbf{x}_t\mid \hat{\mathbf{y}})$ and $q_{t,\theta}^{o}(\mathbf{z}_t\mid \hat{\mathbf{y}})$. By the definition of the $2$-Wasserstein distance,
\begin{align}
W_2^2\!\left(p_t(\mathbf{x}_t\mid \hat{\mathbf{y}}),q_{t,\theta}^{o}(\mathbf{z}_t\mid \hat{\mathbf{y}})\right)\le\mathbb{E}\!\left[\left\|\mathbf{x}_t-\mathbf{z}_t\right\|^2\right].
\label{eq:w2_coupling_upper_bound}
\end{align}
Now define $\boldsymbol{\Delta}_t:=\mathbf{x}_t-\mathbf{z}_t$.
Then
\begin{align}
\frac{d\boldsymbol{\Delta}_t}{dt}&=\frac{1}{T}\left(\boldsymbol{\nu}^*(\mathbf{x}_t,t,\hat{\mathbf y})-\boldsymbol{\nu}_{\theta}(\mathbf{z}_t,t,\hat{\mathbf y})\right).
\label{eq:delta_dynamics}
\end{align}
Hence,
by inserting and subtracting $\boldsymbol{\nu}_{\theta}(\mathbf{x}_t,t,\hat{\mathbf y})$, we have
\begin{align}
-\frac{d}{dt}\left\|\boldsymbol{\Delta}_t\right\|^2&=\frac{2}{T}\boldsymbol{\Delta}_t^\top\left(\boldsymbol{\nu}_{\theta}(\mathbf{x}_t,t,\hat{\mathbf y})-\boldsymbol{\nu}^*(\mathbf{x}_t,t,\hat{\mathbf y})\right)\nonumber\\
&\quad+\frac{2}{T}\boldsymbol{\Delta}_t^\top\left(\boldsymbol{\nu}_{\theta}(\mathbf{z}_t,t,\hat{\mathbf y})-\boldsymbol{\nu}_{\theta}(\mathbf{x}_t,t,\hat{\mathbf y})\right).
\label{eq:delta_norm_derivative_step2}
\end{align}
We bound the first term in \eqref{eq:delta_norm_derivative_step2} by Young's inequality, yielding
\begin{align}
&\frac{2}{T}\boldsymbol{\Delta}_t^\top\left(\boldsymbol{\nu}_{\theta}(\mathbf{x}_t,t,\hat{\mathbf y})-\boldsymbol{\nu}^*(\mathbf{x}_t,t,\hat{\mathbf y})\right)\nonumber\\
&\le\frac{1}{T}\left\|\boldsymbol{\Delta}_t\right\|^2+\frac{1}{T}
\left\|\boldsymbol{\nu}_{\theta}(\mathbf{x}_t,t,\hat{\mathbf y})-\boldsymbol{\nu}^*(\mathbf{x}_t,t,\hat{\mathbf y})\right\|^2.
\label{eq:first_term_young_bound}
\end{align}

For the second term in \eqref{eq:delta_norm_derivative_step2}, by the Lipschitz continuity assumption of $\boldsymbol{\nu}_{\theta}(\mathbf{x}_t,t,\hat{\mathbf y})$ with respect to $\mathbf{x}_t$, we have
\begin{align}
&\left\|\boldsymbol{\nu}_{\theta}(\mathbf{z}_t,t,\hat{\mathbf y})-\boldsymbol{\nu}_{\theta}(\mathbf{x}_t,t,\hat{\mathbf y})\right\|\le L_p(t)\left\|\mathbf{z}_t-\mathbf{x}_t\right\| \label{eq:lipschitz_theta_field}
\end{align}
Therefore,
\begin{align}
&\boldsymbol{\Delta}_t^\top\left(\boldsymbol{\nu}_{\theta}(\mathbf{z}_t,t,\hat{\mathbf{y}})-\boldsymbol{\nu}_{\theta}(\mathbf{x}_t,t,\hat{\mathbf{y}})\right)\nonumber\\
&\le\left\|\boldsymbol{\Delta}_t\right\|\left\|\boldsymbol{\nu}_{\theta}(\mathbf{z}_t,t,\hat{\mathbf{y}})-\boldsymbol{\nu}_{\theta}(\mathbf{x}_t,t,\hat{\mathbf{y}})\right\| \le L_p(t)\left\|\boldsymbol{\Delta}_t\right\|^2.
\label{eq:second_term_lipschitz_bound}
\end{align}

Combine \eqref{eq:delta_norm_derivative_step2}, \eqref{eq:first_term_young_bound}, and \eqref{eq:second_term_lipschitz_bound} and take expectations, yielding
\begin{align}
-\frac{d}{dt}\mathbb{E}\!\left[\left\|\mathbf{x}_t-\mathbf{z}_t\right\|^2\right]\le\frac{2L_p(t)+1}{T}\mathbb{E}\!\left[\left\|\mathbf{x}_t-\mathbf{z}_t\right\|^2\right]&\nonumber\\
+\frac{1}{T}\mathbb{E}_{p_t(\mathbf{x}_t\mid \hat{\mathbf y})}\left[\left\|\boldsymbol{\nu}^*(\mathbf{x}_t,t,\hat{\mathbf y})-\boldsymbol{\nu}_{\theta}(\mathbf{x}_t,t,\hat{\mathbf y})\right\|^2\right]&.
\label{eq:expected_delta_inequality_step2}
\end{align}

The function $\mathbb{E}[\left\|\mathbf{x}_t-\mathbf{z}_t\right\|^2]$ matches the backward Gronwall inequality and the initial term $\mathbb{E}[\left\|\mathbf{x}_T-\mathbf{z}_T\right\|^2]=0$.
We have
\begin{align}
\mathbb{E}\!\left[\left\|\mathbf{x}_t-\mathbf{z}_t\right\|^2\right]\le\int_t^T\exp\!\left(\int_t^s \frac{2L_p(\tau)+1}{T}\,d\tau\right)&\nonumber\\
\cdot\frac{1}{T}\mathbb{E}_{p_s(\mathbf{x}_s\mid \hat{\mathbf y})}\left[\left\|\boldsymbol{\nu}^*(\mathbf{x}_s,s,\hat{\mathbf y})-\boldsymbol{\nu}_{\theta}(\mathbf{x}_s,s,\hat{\mathbf y})\right\|^2\right]ds.&
\label{eq:expected_delta_gronwall_result}
\end{align}

Combining \eqref{eq:w2_coupling_upper_bound} and \eqref{eq:expected_delta_gronwall_result}, we conclude that
\begin{align}
&W_2^2\!\left(p_t(\mathbf{x}_t\mid \hat{\mathbf y}),q_{t,\theta}^{o}(\mathbf{x}_t\mid \hat{\mathbf y})\right)\nonumber\\
&\le\int_t^T\exp\!\left(\int_t^s \frac{2L_p(\tau)+1}{T}\,d\tau\right)\nonumber\\
&\qquad\cdot\frac{1}{T}\mathbb{E}_{p_s(\mathbf{x}_s\mid \hat{\mathbf y})}\left[
\left\|\boldsymbol{\nu}^*(\mathbf{x}_s,s,\hat{\mathbf y})-\boldsymbol{\nu}_{\theta}(\mathbf{x}_s,s,\hat{\mathbf y})\right\|^2\right]ds.
\end{align}

Finally, in our scenario, the condition $\hat{\mathbf{y}}$ is the received signal shared by both distributions and the function is \(L_p(t)\)-Lipschitz uniformly over $\mathbf{\hat{y}}$. Therefore, we have
\begin{align}
&W_2^2\!\left(p(\mathbf{x}),q_{\theta}^{o}(\mathbf{x})\right)
\leq\mathbb{E}_{p(\hat{\mathbf{y}})}\left[W_2^2\!\left(p_0(\mathbf{x}_0\mid \hat{\mathbf y}),q_{0,\theta}^{o}(\mathbf{x}_0\mid
\hat{\mathbf y})\right)\right]\nonumber\\
&\le\int_0^T\exp\!\left(\int_0^s \frac{2L_p(\tau)+1}{T}\,d\tau\right)\nonumber\\&
\cdot\frac{1}{T}\mathbb{E}_{p_s(\mathbf{x}_s, \hat{\mathbf y})}\left[\left\|\boldsymbol{\nu}^*(\mathbf{x}_s,s,\hat{\mathbf y})-\boldsymbol{\nu}_{\theta}(\mathbf{x}_s,s,\hat{\mathbf y})\right\|^2\right]ds.
\end{align}

\footnotesize
\bibliographystyle{IEEEtran}
\bibliography{reference}{}
\end{document}